%% file: tailwind-arxiv.tex
\documentclass[sigconf, nonacm, authorversion, screen]{acmart-arxiv}
\usepackage[inline]{enumitem}
\usepackage{xcolor}
\usepackage{subcaption}
\usepackage{eso-pic}
\usepackage{amsmath, amsthm}
\usepackage{wasysym}
\usepackage{cleveref}
\usepackage{overpic}
\usepackage{tikz}
\usepackage{tabularx}
\usepackage{ragged2e}
\usepackage{algorithm}
\usepackage{algpseudocodex}

\usepackage[newfloat=true,frozencache=true]{minted}

\definecolor{ForestGreen}{RGB}{34,139,34}
\newtheorem{theorem}{Theorem}

\newcommand*\captionlabel[1]{%
  \tikz[baseline=(char.base)]{%
    \node[shape=circle,fill=white,draw=black,inner sep=1.5pt, outer sep=5pt] (char) {%
      \textcolor{black}{\scriptsize \textsf{#1}}};
}}
\newcommand*\captionlabelsmall[1]{%
  \tikz[baseline=(char.base)]{%
    \node[shape=circle,fill=white,draw=black,inner sep=1.5pt, outer sep=5pt] (char) {%
      \textcolor{black}{\tiny \textsf{#1}}};
}}

\newcommand{\sparagraph}[1]{\vspace{1mm}\noindent {\bf #1}}

\input{constants}

\begin{document}
\title{Tailwind: A Practical Framework for Query Accelerators}

\author{Geoffrey X. Yu}
\orcid{0009-0005-3186-1465}
\affiliation{%
  \institution{MIT CSAIL}
  \city{Cambridge}
  \state{MA}
  \country{USA}
}
\email{geoffxy@mit.edu}

\author{Ryan Marcus}
\orcid{0000-0002-1279-1124}
\affiliation{%
  \institution{University of Pennsylvania}
  \city{Philadelphia}
  \state{PA}
  \country{USA}
}
\email{rcmarcus@seas.upenn.edu}

\author{Tim Kraska}
\orcid{0009-0003-2414-2759}
\affiliation{%
  \institution{MIT CSAIL}
  \city{Cambridge}
  \state{MA}
  \country{USA}
}
\email{kraska@mit.edu}

\input{sections/00-abstract}

\maketitle

\input{sections/01-introduction}
\input{sections/02-user-overview}
\input{sections/03a-alp-pattern}
\input{sections/03b-alp-interface}
\input{sections/03c-alp-modeling}
\input{sections/04-details}
\input{sections/05a-eval-cases}
\input{sections/05b-eval-other}
\input{sections/06-related-work}
\input{sections/07-conclusion}

\begin{acks}
We thank Markos Markakis, Xinjing Zhou, Andreas Kipf, and Laurent Bindschaedler
for their feedback on earlier versions of this paper.
This research was supported by Amazon, Google, and Intel as part of the MIT Data
Systems and AI Lab (DSAIL) at MIT. This research was also sponsored by the
Department of the Air Force Artificial Intelligence Accelerator and was
accomplished under Cooperative Agreement Number FA8750-19-2-1000. The views and
conclusions contained in this document are those of the authors and should not
be interpreted as representing the official policies, either expressed or
implied, of the Department of the Air Force or the U.S. Government. The U.S.
Government is authorized to reproduce and distribute reprints for Government
purposes notwithstanding any copyright notation herein.
\end{acks}

\bibliographystyle{ACM-Reference-Format}
\bibliography{tailwind-arxiv}

\clearpage
\appendix
\input{sections/08-appendix}

\end{document}

%% file: constants.tex
\newcommand{\thesystem}{\textsc{Tailwind}}
\newcommand{\alp}{ALP}
\newcommand{\alps}{ALPs}
\newcommand{\tpchp}{TPC-H-Plus}

\newcommand{\OverallTpchRedshiftGeomeanSpeedup}{1.32$\times$}
\newcommand{\OverallTpchDuckDBGeomeanSpeedup}{1.38$\times$}
\newcommand{\OverallTpchDuckDBNaiveSpeedup}{0.46$\times$}

\newcommand{\OverallTpchRedshiftGeomeanSpeedupOne}{1.21$\times$}
\newcommand{\OverallTpchDuckDBGeomeanSpeedupOne}{1.33$\times$}

\newcommand{\OverallTpchpRedshiftGeomeanSpeedup}{1.37$\times$}
\newcommand{\OverallTpchpDuckDBGeomeanSpeedup}{1.76$\times$}

\newcommand{\OverallTpchpRedshiftGeomeanSpeedupOne}{1.49$\times$}
\newcommand{\OverallTpchpDuckDBGeomeanSpeedupOne}{1.78$\times$}

\newcommand{\OverallGeomeanSpeedup}{\OverallTpchDuckDBGeomeanSpeedup}
\newcommand{\WorstTpchOverallSlowdown}{\OverallTpchDuckDBNaiveSpeedup}

\newcommand{\BestGeomeanTpch}{\OverallTpchDuckDBGeomeanSpeedup}
\newcommand{\BestGeomeanTpchp}{\OverallTpchpDuckDBGeomeanSpeedup}
\newcommand{\BestGeomeanSQLStorm}{1.28$\times$}

%% file: sections/00-abstract.tex
\begin{abstract}
Relational database management systems (RDBMSes) can process general-purpose
queries, but often have lower performance compared to purpose-built solutions
for specific queries.
For example, consider a group-by query over a few known groups (e.g., grouping
by country).
While an RDBMS would likely use a hash map to do the grouping, a faster method
could hard-code the expected groups into the query executor.
Such workload-specific techniques, which we call \emph{query accelerators}, are
not widely used in practice because the engineering effort (optimizer and engine
changes, potential bugs) does not always justify the isolated performance gains
(speedup on a specific query).
We propose \thesystem{}: a non-invasive query planner that brings accelerators
into any RDBMS that supports data import/export.
Accelerator builders register accelerators using abstract logical plans
(\alps{}): a new abstraction based on regular tree expressions that specifies
the logical sub-plans each accelerator can correctly replace.
\thesystem{} also uses each \alp{}'s structure to automatically build a neural
network model to predict the accelerator's performance.
At runtime, \thesystem{} sits atop an RDBMS and transparently rewrites queries
to run across one or more accelerators when predicted to be beneficial, falling
back to the underlying RDBMS when not.
Across three distinct case studies, we use \thesystem{} to integrate
workload-specific accelerators with Redshift and DuckDB to achieve geomean
speedups of \BestGeomeanTpch{}, \BestGeomeanTpchp{}, and \BestGeomeanSQLStorm{}.
\end{abstract}

%% file: sections/01-introduction.tex
\section{Introduction}\label{sec:introduction}

\begin{figure}
  \centering
  \includegraphics[width=\columnwidth]{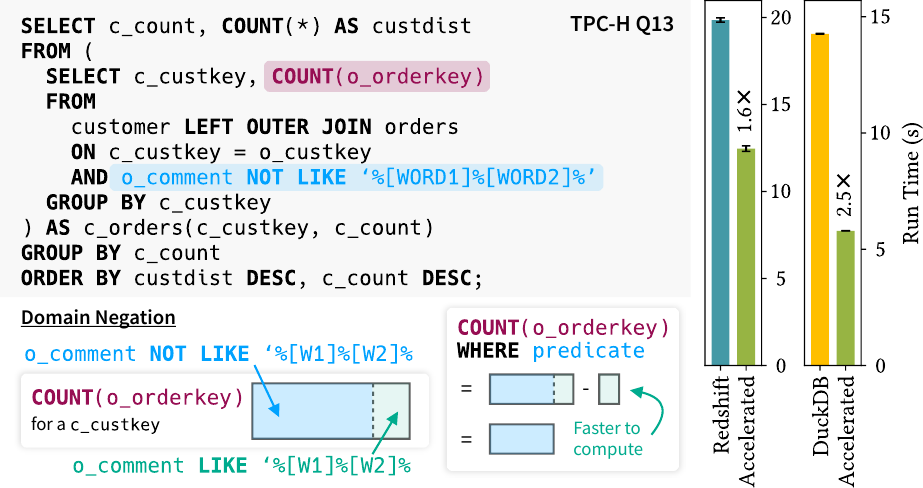}
  \caption{We can accelerate TPC-H Q13 by running the negated version of the
  inner query and \emph{subtracting} its results from the total order counts for
  each customer, netting a 1.6$\times$ and 2.5$\times$ speedup over Redshift and
  DuckDB respectively.}
  \label{fig:intro-example}
\end{figure}

Relational database management systems (RDBMSes) are widely used, in part,
because they provide an expressive and general-purpose way to query structured
data.
But this generality leaves potential performance on the table.
Purpose-built systems~\cite{scuba, nanocubes, influxdb, powerdrill, napa} are at
the other extreme: they might process a specific class of queries significantly
faster using a specialized algorithm and/or pre-computed side information, but
as a consequence, lack the generality of an RDBMS.
Consider the following example.

\sparagraph{Specialization example.}
TPC-H Q13~\cite{tpch} (Figure~\ref{fig:intro-example}),
computes a customer order count distribution for orders with comments that
do not match a pattern. An RDBMS could run this query by filtering
\texttt{orders} by the selection predicate (\texttt{NOT LIKE '\%...\%'}), hash
joining it to \texttt{customer}, and then applying two aggregations.

Although this general approach works, we can do better. The filter on
\texttt{orders} is not selective so the join involves most of the table. If we
pre-compute the total (unfiltered) order counts for each customer, we can run
the query faster using the \emph{negation} of the predicate instead; we subtract
the resulting order counts from the pre-computed totals to get the counts
matching the original predicate.\footnote{There are additional computational
details related to \texttt{NULL} that we elide for brevity and clarity in the
example.}
Since the negated predicate is selective, fewer \texttt{orders} rows match,
leading to a smaller join and thus a faster query.
Figure~\ref{fig:intro-example} compares Redshift's~\cite{aws-redshift} and
DuckDB's~\cite{duckdb} query run times to this ``domain negation'' technique,
showing a 1.6$\times$ and 2.5$\times$ speedup respectively. Although we must
pre-compute the total counts, this extra data is only 0.57\% and 0.20\% of the
total dataset size in the two systems, and it supports any predicate on
\texttt{orders}. This technique is specific: it only applies to filtered
aggregations of numeric expressions. But in exchange, it exploits the query's
structure and the underlying data to deliver a speedup.
Yet, to our knowledge, RDBMSes such as Redshift and DuckDB do not implement it.

\sparagraph{Query accelerators.}
We call such specialized techniques \emph{query accelerators}: possibly-stateful
components that can run certain logical query sub-plans faster than a target
RDBMS.
Crucially, accelerators do not have to be capable of executing every possible
query plan, giving them the freedom to exploit properties of the logical
sub-plan and underlying data for performance.

\sparagraph{Integrating accelerators.}
Accelerators help an RDBMS exploit workload-specific techniques without
sacrificing generality. Yet, integrating them into an RDBMS can be impractical,
if not impossible.
Accelerator builders cannot integrate them directly into closed-source RDBMSes
such as Redshift, and doing so in open-source systems requires substantial
engine-specific changes across the optimizer, cost model, execution engine, and
memory and storage layers.
This large ``blast radius'' makes workload-specific accelerators difficult to
upstream and maintain, and forces accelerator builders to reimplement
integration logic for each supported RDBMS.
Database extensions~\cite{dbextension-survey, pgextensions} and
UDFs~\cite{udf_outline, udo-sichert, dsos-jungmair23} seemingly offer a less
invasive alternative, but the mechanisms we know of~\cite{pgextensions,
duckdb-extensions, udo-sichert} either (i)~provide little to no query optimizer
support~\cite{graceful-udf-wehrstein25, qoff-chaudhuri93}, requiring users to
judge when an accelerator will help and explicitly rewrite their queries to use
them; or (ii)~require code changes to the underlying
engine~\cite{dsos-jungmair23}.
Extensions and UDFs also couple accelerators to a specific RDBMS, making them
hard to reuse elsewhere.
What is missing is a portable, non-invasive approach that automatically inserts
accelerators into query plans when expected to help.

\sparagraph{\thesystem{}.}
We introduce \thesystem{}: a non-invasive query planner and executor for adding
query accelerators to any RDBMS that supports data import and export.
Accelerator builders declare the logical plan fragments their accelerator can
compute and provide an implementation against \thesystem{}'s execution API.
Offline, given a representative query log~\cite{redset,
cloud_analytics_benchmark} and space budget, \thesystem{} learns the
accelerators' performance, selects accelerator instances to minimize predicted
query run time, and manages their side information (if any).
Online, when a query arrives, \thesystem{} acts like a database
proxy~\cite{rds-proxy, pgbouncer} and transparently rewrites queries to use
those accelerator instances when beneficial.
\thesystem{} orchestrates execution across the accelerators and the underlying
RDBMS, transferring intermediate data as needed so that accelerators can replace
sub-plans anywhere within a query plan.
\thesystem{} is complementary to techniques that synthesize specialized query
executors~\cite{castor-feser, bespoke-olap, gendb, gpt-db}, focusing instead on
the problem of intelligently deciding when and where to use accelerators in a
query.
The end result is a system that automatically and \emph{surgically} inserts
accelerators into an RDBMS without modifying its internals, helping to close the
gap between specialized and general-purpose systems.

\sparagraph{Challenges.}
While simple to use, realizing \thesystem{} presents a significant design
challenge. We need a generic way to (\textbf{C1})~let accelerator builders
describe which logical plan fragments their accelerator can correctly replace,
(\textbf{C2})~expose query-specific information needed by the accelerator
implementation, and (\textbf{C3})~automatically model the accelerator's
performance to know when to use it. For example, for domain negation, builders
need a simple but precise way to describe the ``filtered aggregation'' pattern,
the implementation needs to know what predicate to negate and what aggregates to
pre-compute, and \thesystem{} needs to learn how these pieces influence the
accelerator's performance.

\sparagraph{Key component: Abstract logical plans (\alps{}).}
To check if an accelerator can be used on a query (\textbf{C1}), a na\"ive
approach is to try defining its applicability using a view and rely on view
matching algorithms~\cite{mvopt, mvopt-survey}. But views are insufficient
because they express a single concrete logical plan whereas an accelerator might
support a variable-sized sub-plan (e.g., a left-deep join chain).
On the other hand, requiring builders to write all of the code that checks if
their accelerator is usable on a given query is also undesirable. This is
because custom code provides no common structure for extracting an accelerator's
parameters (e.g., the predicate to negate) (\textbf{C2}) and is hard to
featurize for performance modeling (\textbf{C3}).

To address these challenges, we introduce \emph{abstract logical plans}
(\alps{}): a new abstraction that accelerator builders use to describe their
accelerator's semantics.
At its core, an \alp{} centers on a parameterized regular tree
expression~\cite{tata, rte-impl} over logical query plans that specifies the set
of logical plan fragments an accelerator can correctly replace (\textbf{C1}).
The parameters exist to capture query-specific information that can vary across
accelerator uses (e.g., the predicate to negate in domain negation).
The same \alp{} serves two further purposes. It can be compiled into a tree
automaton~\cite{tata, tree-acceptors-doner70, tree-automata-thatcher68} to
efficiently search query plans for matches and extract parameter values
(\textbf{C2}), and its tree structure and parameters can be used to construct a
custom neural network that estimates when the accelerator is beneficial to use
(\textbf{C3}, Section~\ref{sec:key-ideas-model}).
For \thesystem{}, \alps{} therefore provide a unified abstraction for an
accelerator's applicability, interface to its implementation, and performance
modeling.

We demonstrate the generality of \thesystem{} and \alps{} using three case
studies on distinct query workloads. Over all our cases, \thesystem{}
lets us integrate workload-specific accelerators to speed up queries on Redshift
and DuckDB, achieving geomean query run time speedups of \BestGeomeanTpch{},
\BestGeomeanTpchp{}, and \BestGeomeanSQLStorm{}.

\ \\[-0.25em]
\noindent
\textbf{Contributions.}
In summary, we make the following contributions:
\begin{itemize}[leftmargin=*]
  \item Abstract logical plans (ALPs): an abstraction that describes the logical
  plan fragments an accelerator supports, interfaces with the implementation,
  and enables performance modeling.
  \item A specialized graph neural network model featurization for ALPs, used to
  learn performance models for each accelerator.
  \item The \thesystem{} system, which allows for automatically and surgically
  inserting builder-defined accelerators into query plans.
  \item The implementation and evaluation of these ideas in \thesystem{} over
  three case studies.
\end{itemize}

%% file: sections/02-user-overview.tex
\section{\thesystem{} Overview}

\begin{figure*}[t]
  \centering
  \begin{overpic}[width=0.98\textwidth]{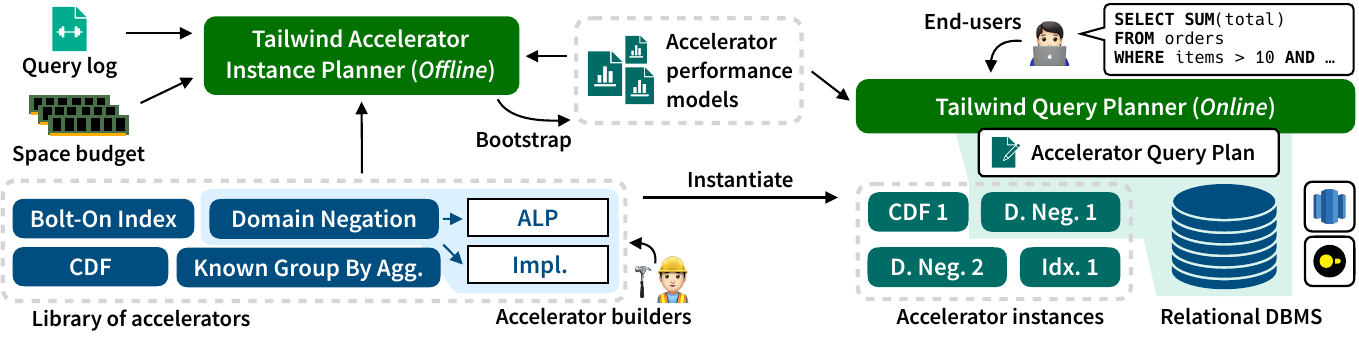}
    \put(0,0.85){\captionlabel{A}} % Library
    \put(34.75,15.85){\captionlabel{B}} % Bootstrap models
    \put(1,23){\captionlabel{C}} % Workload
    \put(-0.5,17){\captionlabel{D}} % Space Budget
    \put(16,21.5){\captionlabel{E}} % Offline
    \put(63.5,0.85){\captionlabel{F}} % Instances
    \put(64,16.75){\captionlabel{G}} % Online
    \put(94.5,13){\captionlabel{H}} % Tailwind plan
  \end{overpic}
  \caption{An overview of the \thesystem{} system and its workflow.}
  \label{fig:overview}
\end{figure*}

We begin with an overview of how accelerator builders and end-users use
\thesystem{} (\Cref{fig:overview}).
We see accelerator builders as data engineers with expertise in query processing
and workload specialization and end-users as any user of the underlying RDBMS;
the roles can overlap too.
Accelerator builders register a ``library'' of accelerators~\captionlabel{A},
each comprising an \alp{} (\Cref{sec:key-ideas-alp}) and implementation.
Offline, \thesystem{} then learns a run time model~\captionlabel{B} for each
accelerator (\Cref{sec:key-ideas-model}) by generating accelerator
instances and measuring their run times to get training data
(\Cref{sec:bootstrapping-details}).

Since accelerators can be stateful (e.g., domain negation pre-computes the
unfiltered aggregations), \thesystem{} next chooses the accelerator
\emph{instances} to create.
End-users provide a query log,~\captionlabel{C} which is a representative
(not exhaustive) list of previously seen queries, and a space
budget~\captionlabel{D}.
\thesystem{}'s offline planner~\captionlabel{E} takes these inputs and finds
accelerator instance candidates by enumerating \alp{} matches in the log.
It then uses its performance models in a greedy search to choose
instances~\captionlabel{F} that optimize the workload's predicted speedup
subject to the space budget (\Cref{sec:key-ideas-inst}).

Online, end-users submit their queries to \thesystem{}'s query
planner~\captionlabel{G}. Using its instantiated accelerators,~\captionlabel{F}
\thesystem{} rewrites queries to use one or more accelerator instances if they
apply and are predicted to be beneficial~\captionlabel{H}.
It then coordinates query execution by invoking the accelerators and underlying
RDBMS, moving intermediate data to/from these components as needed.

%% file: sections/03a-alp-pattern.tex
\section{Abstract Logical Plans}
Recall that abstract logical plans (\alps{}) serve three key roles in
\thesystem{}:
\begin{enumerate*}[label=(\roman*)]
  \item describing the logical plan fragments an accelerator can replace,
  \item exposing matched plan elements and parameters to the accelerator
    implementation, and
  \item providing structure for performance modeling.
\end{enumerate*}
We look at each role in detail next.

\subsection{Specifying Accelerator Applicability}\label{sec:key-ideas-alp}
\noindent
\textbf{A na\"ive non-solution.}
A tempting idea is to express an accelerator's applicability using a database
view and rely on view matching~\cite{mvopt, mvopt-survey} to check if the
accelerator can be used on a query.
However, views fall short because they only define a single concrete query
expression bound to specific relations.
Since an accelerator could potentially handle a variable-sized sub-plan (e.g.,
any left-deep join of arbitrary depth), we need a solution that is capable of
specifying a (possibly infinite sized) set of logical query sub-plans.

\sparagraph{Overview.}
To specify the set of logical plan fragments that an accelerator can correctly
replace, an \alp{} comprises:
\begin{enumerate}[leftmargin=*]
  \item A \textbf{\emph{template}}, which is a parameterized regular tree
  expression~\cite{tata, rte-impl} over logical query plans that describes the
  ``shape'' of the query plan fragments the accelerator supports.
  \item A \textbf{\emph{validator}}, which is a builder-defined function that
  takes a match instance (the parameter values extracted from a query fragment
  that matched the template) and returns true if the accelerator supports the
  match instance.
\end{enumerate}
An accelerator can be used on a query if the template matches a logical sub-plan
in the query \emph{and} the validator returns true for the match instance.
\alps{} need both components because templates alone cannot express non-regular
patterns nor patterns that have semantic constraints (e.g., that two tables
being joined are a primary to foreign key join).
Using only a validator (i.e., only builder-written matching code) is undesirable
because arbitrary code provides no common structure for extracting parameters
and is hard to featurize for performance modeling
(\Cref{sec:alp-model-details}).
Informally, the template captures the ``structure'' of the pattern and a
validator ``refines'' it as needed.
We now describe these two components in detail.

\subsubsection{Templates}\label{sec:alp-templates}
To help explain templates, we first provide some background on regular tree
expressions (RTEs)~\cite{tata}.

\sparagraph{Background on RTEs.}
RTEs generalize regular expressions~\cite{theorycomp} to tree expressions. RTEs
are themselves a directed tree. Using function notation to describe tree
expressions (e.g., $t(a, b)$ denotes a tree of three nodes with root $t$, left
child $a$, and right child $b$), an RTE is defined inductively as~\cite{tata}:
\begin{itemize}[leftmargin=*]
  \item A \emph{leaf token} $a$. The RTE matches a tree with a single $a$ node.
  \item A \emph{tree token} $t(R_1, \dots, R_n)$. The RTE matches a tree with
    root node $t$ and $n$ children where child $i$ matches $R_i$.
  \item An \emph{alternation} $R_1 \mid R_2$, which matches RTEs $R_1$ or $R_2$.
  \item A \emph{concatenation} $R_1 \odot_{c} R_2$, which means an RTE composed
    of $R_1$ with its leaf node $c$ replaced with $R_2$.
  \item A \emph{Kleene star}: $R^{c*}$, which matches $R$ zero or more times
  where the root of each repetition is at the leaf node $c$ (appears in $R$).
\end{itemize}
For example, the RTE $R = \left[t(c, b)\right]^{c*} \odot_{c} a$ would match
$t(a, b)$; $t(t(a, b), b)$; $t(t(t(a, b), b), b)$; and so on.

\sparagraph{From RTEs to Templates.}
While RTEs provide a formal foundation for tree pattern matching, they are not
usable out-of-the-box in \thesystem{} for two key reasons.
First, classical RTEs lack a mechanism for indicating the parts of a pattern
that will vary across distinct matches (e.g., the predicate to negate in domain
negation).
Second, recall that \thesystem{} has an offline phase where it selects
accelerators to instantiate and an online phase where it searches for matches of
these instantiated accelerators within incoming queries.
We need a way to distinguish between parts of the pattern that must be resolved
and fixed at instantiation-time (e.g., the specific aggregates to compute in
domain negation) versus the parts resolved online in an incoming query (e.g.,
the predicate to negate).
These requirements motivate our design of \alp{} templates.

\begin{table}[t]
  \centering
  \caption{The variable types used in \alp{} templates.}
  \label{tab:alps-var-types}
  \footnotesize
  \begin{tabularx}{\columnwidth}{@{}lX@{}}
    \toprule
    \textbf{Variable type} & \textbf{Description} \\
    \midrule
    Column reference & A reference to a column (e.g., in a predicate or aggregation) \\
    Table reference & A reference to a table (e.g., in a scan) \\
    \midrule
    Table expression & A query sub-plan (e.g., scan, join, filter) \\
    Column expression & An expression involving columns (e.g., col1 + col2) \\
    Boolean expression & An expression that produces a boolean (e.g., a predicate) \\ 
    \bottomrule
  \end{tabularx}
\end{table}

\sparagraph{\alp{} Templates.}
An \alp{} template $T$ is also defined inductively and includes leaf
tokens and tree tokens like classical RTEs. They include three more constructs
to support \thesystem{}'s use case:
\begin{itemize}[leftmargin=*]
  \item \textbf{Typed variables.} A typed variable $\mathbf{Var}(x;
  \tau)$ has an identifier $x$ and matches any plan expression of type $\tau$.
    Variables are the parameters in the template and represent parts of the
    pattern that will vary across distinct matches.
    We use types to introduce matching constraints (e.g., the matched variable
    being a column reference versus a column expression) and to provide semantic
    information to featurize for performance modeling
    (Section~\ref{sec:alp-model-details}).
    Table~\ref{tab:alps-var-types} lists the variable types \alps{} support.
  \item \textbf{Alternations.} An alternation $\mathbf{Alt}(x; T_1, \dots, T_n)$
  has identifier $x$ and is otherwise the same as an RTE's alternation.
  \item \textbf{Repetitions.} A repetition $\mathbf{Rep}(x; T_B, T_R, c)$ has an
  identifier $x$ and represents a template $T_R$ that repeats zero (or one) or
  more times at the $c$ leaf node, followed by one occurrence of
    $T_B$.
    Repetitions are implemented using an RTE's Kleene star and concatenation.
    Instead of exposing those constructs, \alp{} templates use repetitions to
    serve a dual practical purpose: (i) providing a single constrained way to
    describe arbitrarily-deep but bounded repetition in query sub-plans (e.g.,
    nested joins), which in turn provides (ii) a single construct representing
    repeating plan substructures to featurize for performance modeling
    (Section~\ref{sec:alp-model-details}).
\end{itemize}
If the same variable (i.e., having the same identifier) appears multiple times
in the \alp{}, the template will only match query sub-plans where each
occurrence of the variable resolves to the same value.
Repetitions are an exception: each repeating instance creates a new scope.
Variables and alternations inside a repetition can resolve to different values
in different repeating instances.
Finally, for each variable, alternation, and repetition, accelerator builders
must indicate if the construct is resolved at \emph{instantiation time} (i.e.,
during \thesystem{}'s offline planning) or \emph{online} (i.e., when matching
accelerator instances to queries just before execution).

\subsubsection{Validator}
Validators have the following signature:
\begin{minted}{rust}
fn is_valid(match_inst, schema) -> bool
\end{minted}
Validators are called with a \texttt{match\_inst} (described in
\Cref{sec:alp-interface}) and the database \texttt{schema} (used if checking
semantic constraints, e.g., that a join is primary to foreign key).

\sparagraph{Expressiveness.}
\alps{} can describe any decidable logical query sub-plan pattern. This is
because the template can match an arbitrary sub-plan (e.g., using a table
expression variable), and the validator can be any decidable function.
However for practical ease-of-use and performance modeling, what
matters is how much of a pattern's structure can be expressed using the
declarative \alp{} template. Empirically, in our case studies
(Section~\ref{sec:eval-case-study}), we were always able to express the
pattern's core structure using the template and only relied on the validator to
check semantic constraints (e.g., that a matched join condition is a
primary-foreign key join).

\subsubsection{\alp{} Example}\label{sec:alp-template-example}
To provide intuition for \alps{}, we go through an \alp{} template for the
domain negation accelerator. \alp{} templates are directed trees. Therefore, for
clarity, we visualize the template in Figure~\ref{fig:dneg-alp} instead of
relying on notation.
We start with a table expression variable~\captionlabel{A} matching any logical
query plan. We declare that it is resolved at instantiation-time because the
pre-computed aggregates depend on this sub-plan.
Next, we declare a filter token~\captionlabel{B} since the template must match a
filter over some sub-plan. The filter has a online-resolved predicate
variable~\captionlabel{C} because the accelerator supports any predicate over
its fixed sub-plan.
Finally, we add a group by aggregation token~\captionlabel{D} since the matched
sub-plan must be an aggregation.
We add a repetition~\captionlabel{E} over a new instantiation-time resolved
column reference variable~\captionlabel{F} since there can be multiple key
columns in the group by.
Similarly, we create another repetition~\captionlabel{G} over a new
instantiation-time resolved variable~\captionlabel{H} since there can be
multiple aggregation expressions.
The accelerator supports \texttt{SUM} and \texttt{COUNT}, so we wrap the
aggregation expression variable in an alternation with these two
options~\captionlabel{J}.

\begin{figure}[t]
  \begin{overpic}[width=\columnwidth]{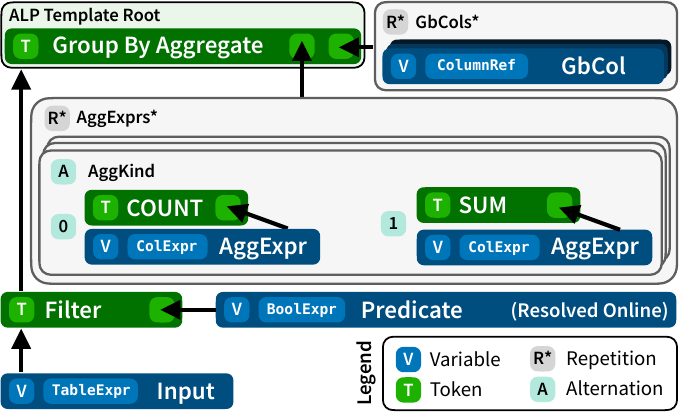}
    \put(36,2.5){\captionlabel{A}} % Input variable
    \put(16.5,14.5){\captionlabel{B}} % Filter
    \put(69.5,14.5){\captionlabel{C}} % Predicate
    \put(26,58){\captionlabel{D}} % Group by aggregate
    \put(72,57){\captionlabel{E}} % Gbcol repetition
    \put(93.5,50.5){\captionlabel{F}} % Gbcol variable
    \put(27,42.7){\captionlabel{G}} % Agg repetition
    \put(49,23.5){\captionlabel{H}} % Agg expr variable
    \put(27,35){\captionlabel{J}} % Agg expr alternation
  \end{overpic}
  \caption{The domain negation \alp{} template (tree depiction).}
  \label{fig:dneg-alp}
\end{figure}

%% file: sections/03b-alp-interface.tex
\subsection{Interfacing with Implementations}\label{sec:alp-interface}
When an \alp{} matches a query, we say that it produces a \emph{match instance},
which is a map that contains
\begin{itemize}[leftmargin=*]
  \item The matched value for each variable in the \alp{}'s template
  \item The matched option for each alternation in the template
  \item The number of times each repetition matched, with the variable
    values and alternation options for each repeating instance
\end{itemize}
An \alp{} and match instance define a concrete logical plan that the accelerator
can execute. A match instance thus represents the ``query-specific information''
that an accelerator implementation needs during \thesystem{}'s offline planning
and online at runtime.

\sparagraph{Accelerator API.}
To add an accelerator to \thesystem{}, accelerator builders need to implement
three functions:
\begin{itemize}[leftmargin=*]
  \item \begin{minted}{rust}
fn alp() -> ALP
\end{minted}
  \item \begin{minted}{rust}
fn build(match_inst, ctx) -> State
\end{minted}
  \item \begin{minted}{rust}
fn run(state, match_inst, input, ctx) -> Output
\end{minted}
\end{itemize}
The \texttt{alp()} function returns the accelerator's \alp{}.
Given a match instance and context about the underlying RDBMS' SQL semantics,
\texttt{build()} performs any pre-computation needed to create an accelerator
instance (e.g., computing the unfiltered aggregates in domain
negation).
\thesystem{} invokes \texttt{run()} with the match instance, the state returned
by \texttt{build()}, any input data, and the same context just mentioned.
Builders must ensure that their accelerator implementation returns the same
results that the underlying RDBMS would for the logical sub-plan that they
replace.
Accelerators at the bottom of a query plan only produce outputs. Accelerators in
the middle of a query plan receive inputs and produce outputs.
\thesystem{} uses Arrow~\cite{arrow} as the input and output data format.

%% file: sections/03c-alp-modeling.tex
\subsection{Accelerator Performance Modeling}\label{sec:key-ideas-model}
Finally, we look at how \alps{} support performance modeling.
A key challenge is that \thesystem{} must support builder-defined accelerators
without making assumptions about how they are implemented, which precludes prior
approaches that featurize UDF code~\cite{graceful-udf-wehrstein25}.

We address this challenge by observing that \alps{} provide a structured
description of what an accelerator computes. Critically, the template
distinguishes between the parameterized and constant parts of the accelerator
and encodes their dependencies. For example, in the domain negation template
(Figure~\ref{fig:dneg-alp}), the input query plan varies across accelerator
instantiations and its value influences the accelerator run time, which the
template's tree structure expresses.

\thesystem{} thus models an accelerator's performance by encoding its \alp{}
template as a tree-structured neural network. The learned weights correspond to
the varying parts of the template (its variables, alternations, and repetitions)
and the message passing follows the dependency relationships in the \alp{}
(\Cref{sec:alp-model-details}).
\thesystem{} learns one model per accelerator. We use the same featurization
strategy for each, but each model will have a different set (and number) of
weights because they are based on different \alp{} templates.

\sparagraph{Bootstrapping.}
Before \thesystem{} uses its accelerators, it must train its models offline.
This only needs to be done once per accelerator. \thesystem{} uses the \alp{}
definitions to generate accelerator instances to collect training data; we
detail this process in \Cref{sec:bootstrapping-details}.

\section{Query Planning With Accelerators}\label{sec:query-planning-idea}
Bringing everything together, we now look at \thesystem{}'s planners.

\sparagraph{Offline planning.}\label{sec:key-ideas-inst}
Before processing any queries, \thesystem{} selects accelerator candidates to
\emph{instantiate}, given a user-provided space budget in bytes $S$. We do this
because accelerators are parameterized and can be stateful; typically the space
usage of all possible candidates greatly exceeds $S$. This offline step ensures
\thesystem{}'s online planner only considers the ``most useful'' set of
accelerator options given $S$.
This selection problem is a generalized version of automatic index selection,
which itself is NP-hard~\cite{indexselhard-chaudhuri04, indexselopt-comer78,
indexselopt-piatetsky83}.

\thesystem{} tackles this challenge using a two step approach. First, it
enumerates a set of possible accelerator instance candidates by matching
its \alps{} against the user-provided query workload to narrow the candidate
search space. Then, it uses a greedy search over the instance candidates,
picking candidates that provide the best predicted run time reduction normalized
by its space usage.
We found this strategy to be acceptable in selecting useful candidates in
practice (Section~\ref{sec:eval-case-study}) and discuss details in
Section~\ref{sec:search-details}.

\sparagraph{\alp{} matching.}
To find accelerator candidates, \thesystem{} searches for \alp{} matches in a
query plan.
There can be multiple equivalent versions of a query plan and only some may
match an \alp{}.
\thesystem{} handles this by converting the query plan into an
e-graph~\cite{egraphs-book, egraphs-thesis-nelson} and applying equality
saturation~\cite{eqsat-tate09}. E-graphs are compact data structures that encode
equivalent versions of expression trees (e.g., logical query plans), similar to
the tree of groups in the Cascades optimizer~\cite{cascades-graefe95}.
\thesystem{} uses e-graphs for engineering reasons~\cite{egg-willsey21}; our
techniques would also apply to a Cascades optimizer.
\thesystem{} searches for \alp{} matches by converting the \alp{} template into
a nondeterministic finite tree automata~\cite{tata} and checking for matches in
each e-class within the e-graph. We describe our matching and \alp{} resolving
algorithms in Section~\ref{sec:matching-details}.

\sparagraph{Online planning.}
At runtime, \thesystem{} receives a query and searches for \alp{} matches using
the instantiated candidates' \alps{}. It enumerates all possible ways to execute
the query using the instances that match and selects the option predicted
to be fastest.
An exhaustive search is practical because, at runtime, few accelerator instances
match a query (zero to two in our experimental workload).
We evaluate this planner's overhead in Section~\ref{sec:eval-details}.

\sparagraph{Query performance modeling.}
Given a query with accelerator(s), \thesystem{} individually predicts the run
times of (i) the accelerators, (ii) any data transfer that would occur, and
(iii) the remaining query.
It sums these predictions to estimate an end-to-end run time.
For~(i), it uses the learned model for the accelerator
(Section~\ref{sec:key-ideas-model}).
For~(ii), it uses a linear model based on the amount of data transferred
(using a cardinality estimate).
For~(iii), \thesystem{} scales the run time of the bare query (without
accelerators) by the ratios of the logical plan operators and estimated input
cardinality between the bare and remaining queries.
\thesystem{} assumes access to (i) a standard cardinality estimator
(e.g.,~\cite{selingeropt-selinger79}) whose statistics are periodically
maintained, similar to how production RDBMSes routinely run \texttt{ANALYZE},
and (ii)
a query run time predictor (e.g.,~\cite{qppnet-marcus19, stage-wu24, orig_pred,
learning_latency, opt_est_par, opt_est}, orthogonal to this work); see
Section~\ref{sec:other-models-details}.

%% file: sections/04-details.tex
\section{Additional \thesystem{} Details}

\subsection{\alp{} Model Details}\label{sec:alp-model-details}
Recall that \thesystem{} models the performance of an accelerator by
representing its \alp{} template as a tree-structured neural network. We now
look at how \thesystem{} constructs and trains these models.

\begin{figure}[t]
  \begin{overpic}[width=\columnwidth]{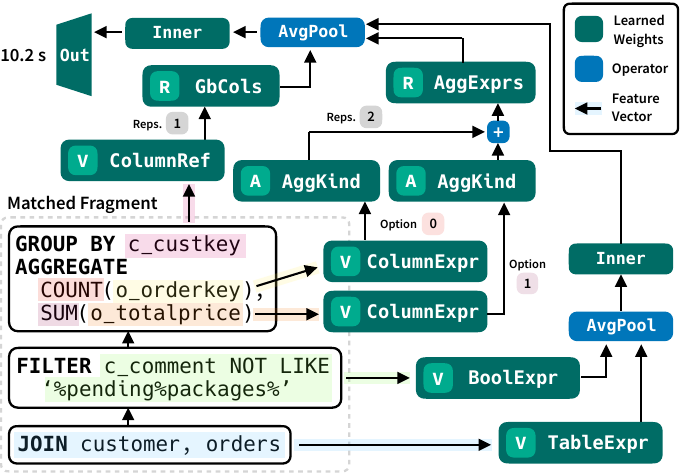}
    \put(42,34){\captionlabel{A}} % Plan fragment
    \put(95,36){\captionlabel{B}} % Internal
    \put(34,68){\captionlabel{B}} % Internal
    \put(76,35){\captionlabel{C}} % Alternation
    \put(76,49){\captionlabel{D}} % Repetition
    \put(35,50){\captionlabel{D}} % Repetition
    \put(2,65){\captionlabel{E}} % Final output
  \end{overpic}
  \caption{The domain negation accelerator's featurization.}
  \label{fig:alp-model-example}
\end{figure}

\subsubsection{Featurization}
We encode the \alp{} template's parameterized constructs (i.e., variables,
repetitions, and alternations) and then propagate and pool the encoded values up
the template tree following its directed edges. At the \alp{}'s root, we pass
the output vector through a multi-layer perceptron (MLP)~\cite{dlbook-2016} to
get a predicted run time.
We use this model architecture because it captures the varying parts of an
accelerator (e.g., what predicate to negate, which aggregates to compute) and
encodes their dependencies using its tree structure.
We walk through an example featurization for the domain negation accelerator
(Figure~\ref{fig:alp-model-example}) to illustrate the details.

\begin{table}[t]
  \caption{The features used for each variable type.}
  \label{tbl:variable-features}
  \footnotesize
  \begin{tabularx}{\columnwidth}{@{}lX@{}}
    \toprule
    \textbf{Variable type} & \textbf{Features} \\
    \midrule
    Table expression & Cardinality, width (bytes), Stage / T3 op.
      features~\cite{stage-wu24, t3-rieger25} \\
    Table reference & Cardinality, width (bytes) \\
    \midrule
    Boolean expression & Selected fraction (\%) \\
    Column exp. or ref. & Cardinality, number of unique values \\
    \bottomrule
  \end{tabularx}
\end{table}

\sparagraph{Learned weights.}
We create one encoder MLP per variable type, repetition, and alternation in the
\alp{} template. We also create one MLP for inner template nodes and one output
MLP. Our domain negation \alp{} template (Figure~\ref{fig:dneg-alp}) has four
variable types, one alternation, and two repetitions. So we create nine MLPs:
seven for the parameterized constructs plus the two additional ones. The teal
boxes in Figure~\ref{fig:alp-model-example} depict the learned MLPs. The boxes
with the same name refer to the same MLP; we show duplicates only to make the
model's inference procedure easier to understand.

\subsubsection{Inference}
Suppose our domain negation \alp{} matches the plan fragment shown in
Figure~\ref{fig:alp-model-example}~\captionlabel{A}. \thesystem{} creates a run
time prediction by traversing the \alp{} template and recursively constructing
an output embedding vector bottom-up by using the following procedure for each
\alp{} construct.

\sparagraph{Variables.}
Variables are always leaf nodes in the \alp{} template. We create a vector
representation for each variable's value, using the set of features shown in
Table~\ref{tbl:variable-features}. We then pass each variable vector through its
corresponding encoder MLP to obtain an output vector.

\sparagraph{Internal tokens.}
We take the output vectors from the token's children and perform average pooling
to obtain a single fixed-size vector. We then pass this vector through the inner
template node MLP to obtain an output vector~\captionlabel{B}.

\sparagraph{Alternations.}
We take the output vector from the option that matched in the template
instance. We concatenate a one hot feature vector corresponding to the
chosen option and pass this new vector through the alternation's MLP to obtain
an output vector.

\sparagraph{Repetitions.}
We take the output vectors from the repetition's children and add them together
elementwise. We concatenate the number of repetitions to the summed vector and
pass this new vector through the repetition's MLP to get an output
vector~\captionlabel{D}.

\sparagraph{Final output.}
We take the output vector from the \alp{}'s root and pass that through
the output MLP to get a run time~\captionlabel{E} prediction.

\subsubsection{Bootstrapping Details}\label{sec:bootstrapping-details}
To learn a model for each accelerator, \thesystem{} requires labeled performance
data. Although \alps{} are for recognizing accelerator patterns, we can
re-purpose them to \emph{generate} possible accelerator instances. \thesystem{}
measures these instances' run times to get labeled data for training.

The high level idea is to select random values for each construct
(e.g., variable) in the \alp{} and return the instances that pass the
validator.
However, this random selection must ensure semantic correctness and diversity.
For example, in the domain negation \alp{} (Figure~\ref{fig:dneg-alp}), the
sampled predicate variable value must involve columns accessible from the input
plan and \thesystem{} needs to avoid generating only true (or only false)
predicates.
To generate instances, \thesystem{} first selects repetition counts and
alternation options, which fixes the ``structure'' of the resulting instance
(e.g., selecting the number of group by columns in domain negation).
Then, it traverses the \alp{} template bottom up and computes the set of
``available columns'' at that stage in the query fragment (e.g., the output
schema of the child query plan node) and then selects values for any variables
using the strategies summarized in Table~\ref{tbl:generation-method}, taking
care to only select available columns.
\thesystem{} repeats this procedure to generate a sufficient number of training
data instances per accelerator.
We found that 10k instances were sufficient, but this number is a hyperparameter
that users can adjust if needed.

\begin{table}[t]
  \caption{The methods we use to generate variable values.}
  \label{tbl:generation-method}
  \footnotesize
  \begin{tabularx}{\columnwidth}{@{}lX@{}}
    \toprule
    \textbf{Variable type} & \textbf{We randomly select} \\
    \midrule
    Table reference & A table in the dataset \\
    Table expression & A query slice from the planning workload \\
    \midrule
    Column reference & A column in the available columns \\
    Column expression & A column, binary arithmetic expression (column and
    literal, or two random columns), or depth 2 binary expression \\
    Boolean expression & An open range, a closed range, equality, two
      predicates joined using \texttt{AND} or \texttt{OR} \\
    \bottomrule
  \end{tabularx}
\end{table}

\subsection{\thesystem{}'s Other Models}\label{sec:other-models-details}
To help decide whether using accelerators is better than the base RDBMS,
\thesystem{} also leverages other models. They are not the focus of this paper,
but we describe them here for completeness.

\sparagraph{Query run time.}
\thesystem{} needs to estimate the run time of a query. Since this is a problem
that has been extensively studied~\cite{t3-rieger25, stage-wu24, lce-sun19,
qppnet-marcus19, zeroshot-hilprecht22, unify-wu22, neo-marcus19, bao-marcus22,
brad-yu24}, we assume an acceptable model exists. In our evaluation, we assume
we have the query's run time on the RDBMS (e.g., it is cached) and study
\thesystem{}'s robustness to query run time prediction errors
(Section~\ref{sec:eval-robustness}).
For remaining queries (the operators after parts of a query have been replaced
by accelerator(s)), \thesystem{} scales the bare query's run time by the ratio
of logical operators and estimated input cardinalities between the bare and
remaining plans. This coarse approach is effective because accelerators typically
replace large portions of a query.

\sparagraph{Data transfer.}
We estimate \thesystem{}'s intermediate data transfer time using
$\mathrm{max}(S/k_i + S/k_e, C)$ where $S$ is the size of the data transferred
(e.g., in bytes) and $k_i$, $k_e$, and $C$ are empirically measured constants
representing the data import and export rates and $C$ is a minimum transfer
time.
\thesystem{} measures these constants offline before handling user queries.
We use S3~\cite{s3} to transfer data between the accelerators and Redshift. For
DuckDB, accelerators run in-process, so there is a negligible transfer overhead.

\subsection{\alp{} Matching and Resolution}\label{sec:matching-details}
\thesystem{}'s planner searches for accelerator use opportunities in a workload.
This process consists of three conceptual parts: (i) matching \alps{} to
queries, (ii) resolving the matched constructs in the \alp{} (e.g., assigning
concrete values to its variables), and (iii) enumerating query plans that
contain the matched accelerators.

\subsubsection{\alp{} Matching}
\alps{} match patterns in tree expressions after being compiled into
non-deterministic finite tree automata (NFTAs)~\cite{rtenfta-belabbaci18,
tree-acceptors-doner70}. NFTAs are analogous to NFAs for
strings~\cite{theorycomp} (see \Cref{apdx:matching} for background).
However, simply matching \alps{} to query plans is insufficient because the
query might need to be rewritten before it matches (e.g., reordering a filter in
the query plan).
\thesystem{} addresses this problem by converting queries into an
e-graph~\cite{egg-willsey21}, using equality saturation to find equivalent
queries~\cite{eqsat-tate09}, and then matching \alps{} on the resulting e-graph.

\sparagraph{NFTA Matching Algorithm on E-Graphs.}
E-graphs are data structures that represent equivalent tree expressions (e.g.,
multiple equivalent logical query plans) in a single data
structure~\cite{egraphs-thesis-nelson, egraphs-book}, similar to logical groups
in a Cascades optimizer~\cite{cascades-graefe95}.
\thesystem{} uses a top-down NFTA matching algorithm~\cite{tata} on each e-class
in the query's e-graph. Since existing NFTA matching algorithms are for tree
expressions (not e-graphs), we make two key extensions.

Our first extension is to search over all e-nodes in the e-class. We accept
(match) the root e-class if there is at least one e-node in the root class where
all of its children reach a leaf transition. We examine all e-nodes in the
e-class because the NFTA might only match a tree rooted at one of the nodes in
the class (i.e., corresponding to a specific way of expressing the accelerator's
plan fragment).

Our second extension is in rejecting match cycles. NFTA transitions can contain
state cycles and e-graphs can contain cycles. Match algorithms for trees do not
have this problem because tree expressions are acyclic, thus match cycles are
impossible. Our approach is to reject matching paths that contain \emph{both} a
transition cycle and e-class cycle. Rejecting only transition cycles is
incorrect because they describe repetitions in the tree expression. Rejecting
only e-class cycles is incorrect because the NFTA could be matching a tree
expression that has a finite number of recursive repetitions.
See Algorithm~\ref{alg:nfta-match} in \Cref{apdx:matching} for complete details.

\subsubsection{Resolving ALP Constructs}
Next, \thesystem{} resolves (i.e., assigns values to) the \alp{}'s variables,
alternations, and repetitions to provide the necessary inputs for its models and
the accelerator's implementation.
When building an \alp{}'s NFTA~\cite{rtenfta-belabbaci18}, \thesystem{} stores
the identifier of the ``accepting state'' associated with each template
construct.
During matching, it records the e-classes it encounters when following a
transition from these stored states, which lets it map an \alp{} construct to
its matched e-class in the e-graph.
\thesystem{} resolves a variable by extracting the smallest tree expression from
its matched e-class. It resolves alternations by checking which option's output
state matched.
For repetitions, which can contain nested variables and alternations,
\thesystem{} records discovery and finishing logical timestamps during e-graph
matching.
It can then unambiguously map nested variables and alternations to their correct
repetition instance by finding the smallest interval that contains the matched
construct's traversal interval.

\subsubsection{Extracting Accelerator Plans}
Finally, \thesystem{} enumerates accelerator execution plans using the e-graph.
During matching, \thesystem{} inserts a new e-node for the accelerator into the
e-class that it matches. This indicates that the accelerator is equivalent to
the other expressions in the e-class.
Then, to enumerate the query execution options, \thesystem{} constructs the
options recursively per e-class.
For each non-accelerator e-node, we compute its execution options as the
cartesian product of all options for its children e-classes. There are typically
few accelerator candidates that match a query (zero to two times), so the
cartesian product is small in practice and usually has a size of one. We union
each node's options with the accelerator(s) to get the set of options for that
e-class. \thesystem{} then returns the execution options from the root e-class.

\begin{figure*}[t]
  \centering
  \begin{subfigure}[t]{0.47\textwidth}
    \begin{overpic}[width=\textwidth]{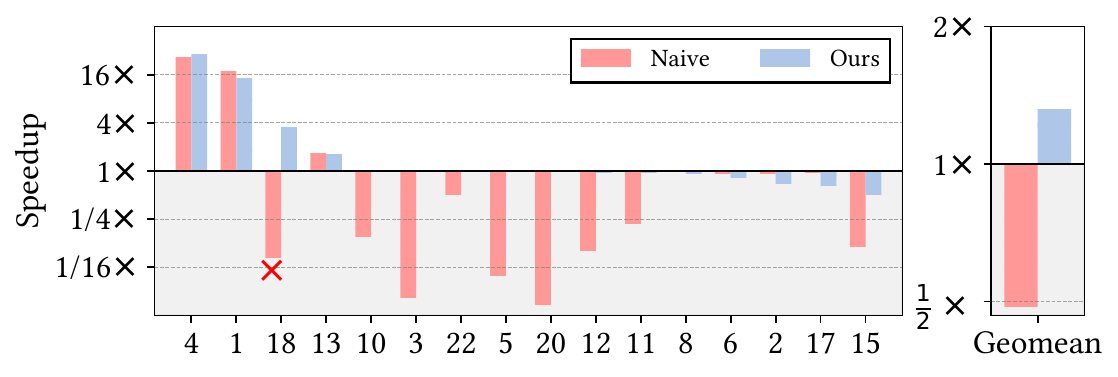}
      \put(24,25){\captionlabelsmall{A}}
    \end{overpic}
    \caption{Redshift (10\%) (X indicates timeout)}
    \label{fig:tpch-redshift10}
  \end{subfigure}
  \qquad
  \begin{subfigure}[t]{0.47\textwidth}
    \begin{overpic}[width=\textwidth]{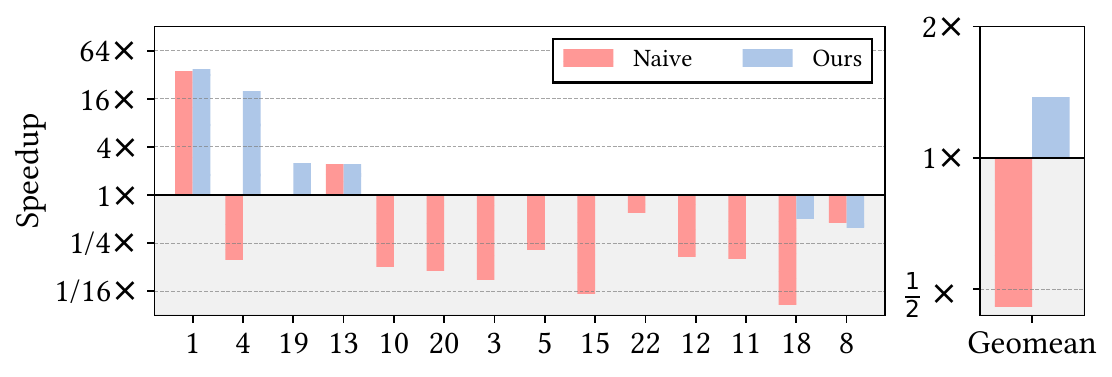}
      \put(20.5,27.5){\captionlabelsmall{B}}
      \put(25,21.5){\captionlabelsmall{C}}
    \end{overpic}
    \caption{DuckDB (10\%)}
    \label{fig:tpch-duckdb10}
  \end{subfigure}
  \caption{\thesystem{} accelerates TPC-H (SF = 100) on both Redshift and DuckDB
  by \OverallTpchRedshiftGeomeanSpeedup{} and \OverallTpchDuckDBGeomeanSpeedup{}
  respectively (geomean).}
  \label{fig:tpch-e2e}
\end{figure*}

\subsection{Selecting Accelerator Instances}\label{sec:search-details}
Recall that \thesystem{} uses a greedy search to select the accelerator
instances to use for a storage budget. \thesystem{} first enumerates possible
instance candidates by matching its \alps{} against the user-provided query
log.
It then evaluates these candidates by computing their normalized benefit: the
predicted reduction in workload execution time divided by the accelerator's
space usage. The algorithm iteratively selects the instance with the highest
normalized benefit, recomputing benefits after each selection, until the there
is no more run time reduction or it exhausts the space budget.

\sparagraph{Discussion.}
We take this two step approach for two reasons.
First, matching \alps{} to queries in the query log to enumerate candidates
ensures we only consider instances that would match the workload. This strategy
prunes the search space using our assumption that workload log is representative
of the online workload.
Second, we use a greedy search because our candidates usually affect one query
each and multiple candidates have an monotonic effect on a query's speedup.
While complex interactions are possible (e.g., two candidates accelerating a
query together but not individually), our experiments show that a greedy search
is practical (\Cref{sec:eval-case-study}).

\sparagraph{Analysis.}
Let $n$ be the number of accelerator instance candidates. Our greedy search
has a complexity of $O(n^2)$ since it recomputes the normalized benefit (for the
remaining candidates) each time it selects a candidate. An exhaustive search
would consider $O(2^n)$ candidate sets. The algorithm terminates because, on
each iteration, it selects a candidate and the set of candidates is finite.

%% file: sections/05a-eval-cases.tex
\begin{figure*}
  \centering
  \begin{subfigure}[t]{0.47\textwidth}
    \begin{overpic}[width=\textwidth]{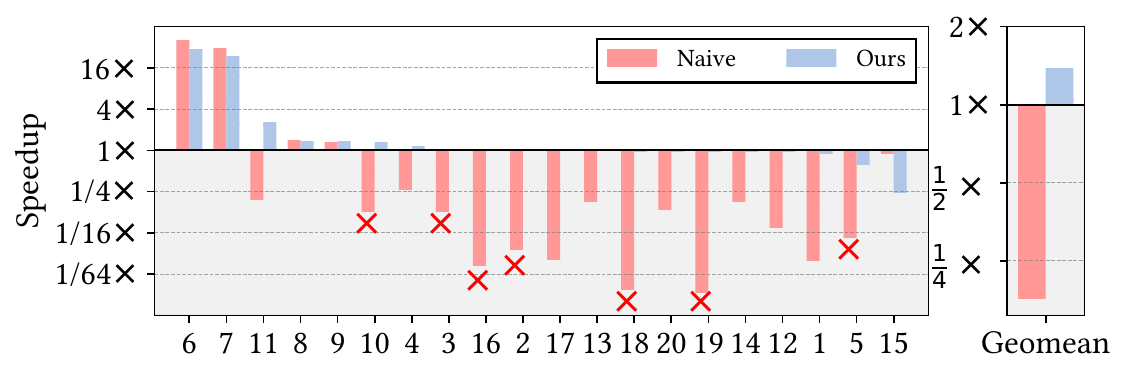}
      \put(31,9.5){\captionlabelsmall{E}}
    \end{overpic}
    \caption{Redshift (10\%) (X indicates timeout)}
    \label{fig:tpchp-redshift10}
  \end{subfigure}
  \qquad
  \begin{subfigure}[t]{0.47\textwidth}
    \begin{overpic}[width=\textwidth]{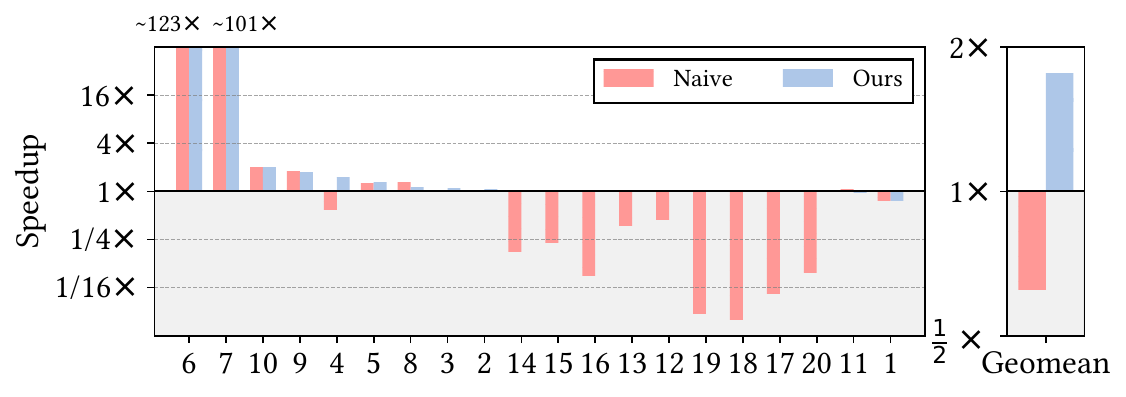}
      \put(16.5,14.5){\captionlabelsmall{D}}
    \end{overpic}
    \caption{DuckDB (10\%)}
    \label{fig:tpchp-duckdb10}
  \end{subfigure}
  \caption{\thesystem{} accelerates \tpchp{} on Redshift and DuckDB
  by \OverallTpchpRedshiftGeomeanSpeedup{} and \OverallTpchpDuckDBGeomeanSpeedup{}
  respectively (geomean).}
  \label{fig:tpchplus-e2e}
\end{figure*}

\section{Workload Case Studies}\label{sec:eval-case-study}
Across three distinct query workloads, we now show how users can use
\thesystem{} to incorporate workload‑specific accelerators that speed up
selected queries on both closed and open source RDBMSes.
We begin by describing our common experimental setup.

\sparagraph{Base systems.}
We run \thesystem{} with Redshift~\cite{aws-redshift} and DuckDB~\cite{duckdb}.
Redshift is a closed-source managed distributed database system. DuckDB is an
embedded single-process database system.
We run Redshift on a cluster of two \texttt{ra3.large} nodes. We run DuckDB
v1.3.2 on an \texttt{r6id.xlarge} EC2 instance~\cite{ec2}, placing the DuckDB
file on its NVMe drive. The \texttt{r6id.xlarge} instance has the same number of
vCPUs and memory as the Redshift cluster in aggregate.

\sparagraph{Accelerator compute.}
For DuckDB, we run the accelerators on the same \texttt{r6id.xlarge} machine as
the RDBMS. As Redshift is a managed system, we run the accelerators on our own
machine (20 core Intel Xeon Gold 6230, 128 GiB of memory). We limit the
accelerators to 4 threads so they use similar compute resources.

\sparagraph{Na\"ive baseline.}
We compare \thesystem{} to a na\"ive strategy that always uses an accelerator if
it matches the query. It greedily selects the instances that replace the most of
a query until it exhausts the space budget. At runtime, if there are multiple
ways to use the accelerators, it again picks the option that replaces the most
of the query with accelerators.
We use a timeout of 5 minutes to avoid ``getting stuck'' running poor plans
chosen by the na\"ive strategy.

\sparagraph{Implementation.}
We implemented \thesystem{} and the accelerators used in these case studies in
Rust using $\sim$65,000 lines of code.

\subsection{Case Study 1: TPC-H}\label{sec:eval-cs1}
We start by looking at TPC-H~\cite{tpch}, which is a standard analytical query
benchmark. We use scale factor 100 (i.e., 100 GB of data).

\subsubsection{Accelerator Library}\label{sec:eval-cs1-library}
The first step in using \thesystem{} on a workload is to identify accelerators
that exploit the characteristics of the queries and/or underlying dataset.
Recall that the goal of this work is not to propose new general optimizations
that apply to all queries, but to design a new framework that integrates
workload-specific accelerators with an RDBMS.
We use four such accelerators in this case study (not necessarily an exhaustive
set of accelerators).

\sparagraph{Domain negation.}
This is the accelerator described in Section~\ref{sec:introduction}.

\sparagraph{Cumulative filtered aggregates (CDFs).}
Another acceleration opportunity is in queries involving
invertible aggregations~\cite{incrementalagg-tangwongsan15} (\texttt{SUM},
\texttt{COUNT}, \texttt{AVG}) over data filtered by an ordered column (e.g., a
date range).
If we pre-compute a cumulative aggregation over the range, answering these query
fragments reduces to looking up the cumulative values for the range endpoints
and returning the difference, which is faster than executing the query from
scratch.

\sparagraph{Known group by with fused arithmetic.}
We can accelerate queries that aggregate over a few known groups by hard-coding
the expected groups into a query executor and fusing the aggregation computation
into one compute kernel.
For example, Query 1 from TPC-H only produces four groups and performs
arithmetically-heavy aggregations~\cite{compiled-vect-queries-kersten18,
tpchanalysis-boncz13}.
The performance gain is from (i) avoiding a hash map for grouping and (ii)
keeping the intermediate aggregations in registers during execution.
This technique differs from standard query compilation, as hash maps are used to
collect groups~\cite{querycompilation-neumann11}. While a sophisticated query
compiler could in theory use this specialization too, general-purpose systems
would not.

\sparagraph{Ordered indexes.}
Some analytical RDBMSes (e.g., Redshift) do not have
indexes~\cite{redshift-docs, bigquery-docs}. Indexes (i.e., a secondary sort
order) could help speed up selective scans and point lookups.

\subsubsection{Results and Key Takeaways}
We run \thesystem{} with a 10\% and 1\% space budget (i.e., a 10~GiB and 1~GiB
budget for a 100~GiB dataset), which is comparable in size to other
techniques that use additional space to accelerate analytical
queries~\cite{copyright, sagedb, parachute}.
We generate one query instance per query template in the workloads;
\thesystem{}'s offline planner uses these queries to select accelerator
instances. We then run the workload with a different set of query instances
(i.e., with different placeholder values).
Note that we exclude query 21 from the experiment because one of its subqueries
(that \thesystem{} considers) triggers a query planning performance bug in
Redshift (the subquery takes over an hour to \texttt{EXPLAIN}).

Figure~\ref{fig:tpch-e2e} shows our results on Redshift and DuckDB with a 10\%
space budget; see \Cref{apdx:tpch1} for the 1\% plots. The $y$-axis is the
speedup relative to running the full query on the RDBMS and is in log scale;
higher is better. Bars below 1$\times$ represent a slowdown. The $x$-axis is the
query sorted by decreasing speedup.
We report a speedup summary by taking the geomean across all queries in the
workload~\cite{tpch}. For clarity, we only plot the queries where there was a
speedup (or slowdown); for some queries, no accelerator was used so their
performance is unchanged.

\textbf{\thesystem{} speeds up queries on both RDBMSes by
\OverallGeomeanSpeedup{} on average (geomean).}
On TPC-H with a 10\% (1\%) space budget, \thesystem{} accelerates queries by
\OverallTpchRedshiftGeomeanSpeedup{} (\OverallTpchRedshiftGeomeanSpeedupOne{})
on Redshift and \OverallTpchDuckDBGeomeanSpeedup{}
(\OverallTpchDuckDBGeomeanSpeedupOne{}) on DuckDB (geomean).
The speedup on TPC-H comes from the CDF (Q1, Q4), index (Q18, Q19), and domain
negation (Q13) accelerators. The known group by accelerator also applies to Q1,
but \thesystem{} selects the CDF as it provides a larger speedup.
This result shows that \thesystem{} can leverage its accelerators across a
diverse set of queries, RDBMSes, and space budgets.

While \thesystem{} achieves overall speedups, there are a few queries where it
adds a small slowdown. In all but four cases, the slowdown is from
\thesystem{}'s optimization time, as those queries are short running (hundreds of
milliseconds). We study \thesystem{}'s overhead in
Section~\ref{sec:eval-overhead}.
For DuckDB Q18, the slowdown is due to a misprediction
in the remaining query's run time. While the possibility of mispredictions is
fundamental, \thesystem{} can cache its mistakes and avoid the poor plan the
next time the query arrives. This mitigation is effective in repetitive query
workloads, which are common in practice~\cite{redset}.

\textbf{Na\"ively using accelerators when they match leads to significant
slowdowns, as slow as \WorstTpchOverallSlowdown{} (DuckDB).}
The slowdowns on Redshift are mostly due to data transfer: moving intermediate
results from the accelerator to Redshift (or vice-versa) can take longer than
running the entire query on Redshift.
DuckDB has a negligible data transfer overhead as the accelerators run
in-process. Its slowdowns are because the na\"ive strategy uses the domain
negation accelerator whenever it matches, even when not beneficial. When the
predicate is already selective, its negation processes more data (and is thus
slower). Thus overall, considering an accelerator's full effect on the query is
essential for achieving workload speedups using accelerators.

\textbf{When multiple accelerator options exist for a query, \thesystem{}'s
performance models help it select a better option.}
For Q18 and Q4 in Figures~\ref{fig:tpch-redshift10}~\captionlabel{A}
\ref{fig:tpch-duckdb10}~\captionlabel{B}, \thesystem{} and the na\"ive strategy
make different decisions. For Q18, the na\"ive strategy uses domain negation
because it can push down most of the query into the accelerator. But that
decision leads to a slowdown. In contrast, \thesystem{} uses an index on a
subquery of Q18, which it correctly predicts as the faster option.
Q4 is similar: the na\"ive strategy uses domain negation but \thesystem{}
selects the CDF accelerator as it correctly predicts it to be the faster
option.

\textbf{\thesystem{} successfully tailors its accelerator choices to the
underlying RDBMS.}
For TPC-H Q19~\captionlabel{C}, \thesystem{} uses an index on DuckDB but avoids
it on Redshift as it correctly predicts that the data transfer would be too slow
on Redshift.

\begin{figure}[t]
  \centering
  \includegraphics[width=\columnwidth, clip, trim={0cm 0.22in 0cm 0cm}]{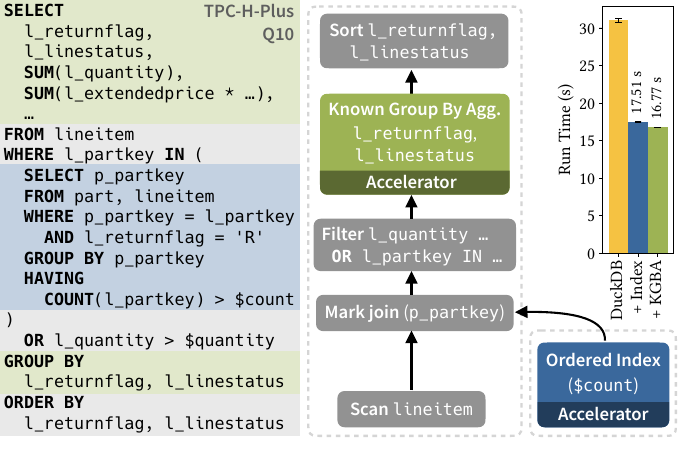}
  \caption{\thesystem{} leverages two accelerators to speed up \tpchp{} Q10 by
  1.9$\times$ on DuckDB.}
  \label{fig:two-accel-example}
\end{figure}

\subsection{Case Study 2: \tpchp{}}\label{sec:eval-cs2}
As TPC-H queries are structurally simple, we develop an extended workload called
\tpchp{} for this next case study. \tpchp{} contains 20 query templates that
merge fragments of TPC-H queries (hence its name) to provide opportunities for
multiple accelerator matches per query. Half of the workload's queries match one
or more accelerators that provide a speedup. The other half are queries that
match our accelerators but would slow down the query if used. We run \tpchp{}
over a scale factor 100 TPC-H dataset.

\subsubsection{Accelerator Library}
Since \tpchp{} is based on TPC-H query fragments, we use the same accelerator
library (Section~\ref{sec:eval-cs1-library}).

\subsubsection{Results and Key Takeaways}
Like our first case study (Section~\ref{sec:eval-cs1}), we run \thesystem{} with
a 10\% and 1\% space budget and use the same methodology to create offline
planning queries and online queries for execution.
Figure~\ref{fig:tpchplus-e2e} shows results for a 10\% space budget
(\Cref{apdx:tpchp1} has the 1\% budget plots), from which we draw the following
conclusions.

\textbf{\thesystem{} speeds up \tpchp{} queries on Redshift by
\OverallTpchpRedshiftGeomeanSpeedup{} and DuckDB by
\OverallTpchpDuckDBGeomeanSpeedup{} on average (geomean).}
With a 10\% (1\%) space budget, \thesystem{} accelerates queries by
\OverallTpchpRedshiftGeomeanSpeedup{} (\OverallTpchpRedshiftGeomeanSpeedupOne{})
on Redshift and \OverallTpchpDuckDBGeomeanSpeedup{}
(\OverallTpchpDuckDBGeomeanSpeedupOne{}) on DuckDB (geomean).
The speedup comes from all four accelerators. Q6 and Q7 see large speedups
because they use the CDF accelerator, which leverages pre-computation. These
queries have a higher speedup on DuckDB (Figure~\ref{fig:tpchp-duckdb10}
\captionlabel{D}) because it has a negligible data transfer cost.
This result shows that \thesystem{} identifies opportunities to use its
accelerators across a diverse set of queries, RDBMSes, and space budgets.

While \thesystem{} achieves overall speedups, it introduces a slowdown on
Redshift Q5, Q15, and DuckDB Q1 due to a misprediction in the accelerator's run
time or the remaining query's run time. While possible mispredictions are
fundamental, \thesystem{} can cache its mistakes and avoid the poor plan the
next time the query arrives (effective in repetitive query workloads, common in
practice~\cite{redset}).

\textbf{Like in our first case study, \thesystem{} (i) avoids accelerator
instances that introduce slowdowns and (ii) tailors its accelerator choices to
the underlying RDBMS.}
Many of the na\"ive strategy's timeouts in Figure~\ref{fig:tpchp-redshift10} are
because it blindly applies the known group by accelerator. Known group by is
usually a poor choice for Redshift, as \thesystem{} must transfer the input data
from the RDBMS to the accelerator to aggregate (usually large). The slowdowns in
Figure~\ref{fig:tpchp-duckdb10} are due to the domain negation accelerator,
which is not always a good choice to use. \thesystem{}, in contrast, leverages
its models to avoid these poor choices.

\textbf{When beneficial, \thesystem{} correctly leverages multiple
accelerators per query.}
It achieves geomean speedups of 5.62$\times$ on Redshift (Queries 7, 9) and
4.60$\times$ on DuckDB (Queries 5, 7, 9, 10) (averaged among these queries).
Figure~\ref{fig:two-accel-example} shows an example of how \thesystem{} uses the
known group by accelerator together with the index over a subquery to speed up
query 10 on DuckDB.
Tailwind uses multiple accelerators more frequently on DuckDB because it can
exploit the known group by accelerator; on Redshift, high data transfer
overheads make this accelerator a poor choice.
For example, the nai\"ve strategy slows down query 10 on Redshift because it
incorrectly applies the known group by (Figure~\ref{fig:tpchp-redshift10}
\captionlabel{E}).

\begin{figure}
  \centering
  \begin{overpic}[width=\columnwidth]{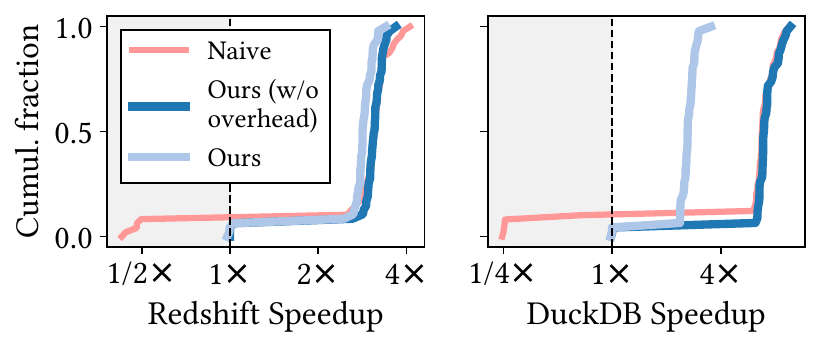}
    \put(86.5,30){\captionlabel{G}} % Overhead
    \put(20,12){\captionlabel{H}}
    \put(65,12){\captionlabel{H}}
  \end{overpic}
  \caption{Speedups on our SQLStorm Stack Overflow queries.}
  \label{fig:eval-cs3}
\end{figure}

\subsection{Case Study 3: SQLStorm Stack Overflow}
In our final case study, we work with the Stack Overflow
dataset~\cite{stackoverflow-dataset} and SQLStorm queries~\cite{sqlstorm25}.
Unlike TPC-H, which uses synthetic data, the Stack Overflow dataset comprises
240 GB of real world data from the Stack Overflow
website~\cite{stackoverflow-dataset}. SQLStorm is a query workload comprising
$\sim$18k diverse queries generated for this schema~\cite{sqlstorm25}.

\subsubsection{Accelerator Library}
This case study involves different queries and data, so we add two accelerators
that are well-suited to SQLStorm. These are not necessarily an exhaustive set
for SQLStorm.

\sparagraph{Top-k pushdown.}
A common SQLStorm query pattern involves joining tables followed by a top-$k$
operation (\texttt{ORDER BY} with a small \texttt{LIMIT}). A known optimization
is to push the top-$k$ operation below the join~\cite{topkpush-carey97}, which
reduces the number of tuples processed by the join. But this optimization is
only correct if the join does not eliminate any of the pre-selected top-$k$
rows.
We empirically found that neither Redshift nor DuckDB performs this pushdown,
even when we explicitly declare foreign key constraints.
Thus, we leverage domain-specific knowledge of the Stack Overflow schema to
identify join cases where this pushdown can be correctly applied, and add these
schema-specific cases to a stateless accelerator.

\sparagraph{Badges MV.}
SQLStorm queries include subqueries aggregating user badge counts, which can be
accelerated using a classical materialized view (MV) in \thesystem{}. We
implement this MV with a ``sparse table'' optimization that decouples users with
zero badges from badge earners. Over half of Stack Overflow users have no
badges~\cite{stackoverflow-dataset}, so the accelerator only stores their IDs
instead of full rows for zero counts. Compared to classic MVs, this approach
lowers the accelerator's space usage for skewed distributions.

\sparagraph{Previous accelerators.}
The domain negation, CDF, and ordered index accelerators
(Section~\ref{sec:eval-cs1-library}) also match queries in this workload, so we
include them in \thesystem{}'s accelerator library as well.

\subsubsection{Workload Setup}
SQLStorm contains around 18k distinct queries~\cite{sqlstorm25}. To construct a
workload, we sample 1000 queries. We then execute them on Redshift and DuckDB
and retain the queries that complete without an error under 60 seconds on both
engines, resulting in 356 queries. We use a \texttt{r6id.2xlarge} instance for
DuckDB to have a large enough NVMe drive for the dataset.

\subsubsection{Results and Key Takeaways}
Similar to our previous case studies, we use a 10~GB space budget, which
corresponds to $\sim$4.5\% of the total dataset size. We use half of the sampled
queries (178) in \thesystem{}'s offline planning pass and run other half online.
Figure~\ref{fig:eval-cs3} shows our results, plotted as a speedup CDF. The
$x$-axis is the achieved speedup (in a log scale) over running the queries
directly on the RDBMS.
We plot the na\"ive strategy, \thesystem{}, and \thesystem{} without its query
optimization overhead. For clarity, Figure~\ref{fig:eval-cs3} only plots the
queries where an accelerator matched.

\textbf{Averaging over the queries in our workload, \thesystem{} speeds up
queries by 1.28$\times$ and 1.27$\times$ (geomean) on Redshift and DuckDB
respectively.}
On both RDBMSes, the speedup mostly comes from the top-$k$ accelerator. The
difference between \thesystem{} and the na\"ive strategy at the upper end
(Figure~\ref{fig:eval-cs3}~\captionlabel{G}) is due to \thesystem{}'s
optimization overhead (performance prediction).
This overhead is similar on both systems, but is a larger proportion of the
query run times on DuckDB (the queries run faster). Note that the query run
times are not directly comparable between Redshift and DuckDB because they are
running on different hardware.

\textbf{Like previous case studies, \thesystem{}'s models help it avoid
accelerators that slow down a query.}
While the na\"ive approach can occasionally luck into the right accelerator
choice, i.e., gaining the benefits of \thesystem{} without the optimization
overheads, it will also make poor choices that lead to regressions. On both
Redshift and DuckDB, it incorrectly uses domain negation and introduces
slowdowns~\captionlabel{H}.
\thesystem{} pays some optimization overhead, therefore reducing our relative
advantage on short running queries, but it has the benefit of avoiding such
regressions.

%% file: sections/05b-eval-other.tex
\section{Drill Down Evaluation}\label{sec:eval-details}
We now examine how \thesystem{}'s components contribute to its
performance. We seek to answer the following questions:
\begin{itemize}[leftmargin=*]
  \item How accurate are \thesystem{}'s accelerator performance models and how
  do different models affect its decisions? (Section~\ref{sec:eval-learning})
  \item How does the space budget affect speedups?
  (Section~\ref{sec:eval-budget})
  \item How do query run time prediction errors affect \thesystem{}'s query
  acceleration decisions? (Section~\ref{sec:eval-robustness})
  \item What is \thesystem{}'s run time overhead? (Section~\ref{sec:eval-overhead})
  \item How do writes impact \thesystem{}'s acceleration?
  (Section~\ref{sec:eval-writes})
\end{itemize}

\noindent
We run our drill down evaluation on TPC-H and \tpchp{} as they sufficiently
capture the representative behavior of \thesystem{}.

\begin{table}[t]
\caption{The \alp{} neural network's prediction error (p50, p90 Q-error)
compared to simpler models (lower is better).}
\label{tbl:alp-modeling}
\small
\begingroup
\setlength{\tabcolsep}{4pt}
\begin{tabularx}{\columnwidth}{@{}X rr rr rr rr@{}}
  \toprule
   & \multicolumn{2}{c}{\textbf{Mean val.}} & \multicolumn{2}{c}{\textbf{Reg. forest}} & \multicolumn{2}{c}{\textbf{MLP}} & \multicolumn{2}{c@{}}{\textbf{ALP NN}} \\
   \cmidrule(lr){2-3} \cmidrule(lr){4-5} \cmidrule(lr){6-7} \cmidrule(l){8-9}
   \textbf{Accelerator} & \textbf{p50} & \textbf{p90} & \textbf{p50} & \textbf{p90} & \textbf{p50} & \textbf{p90} & \textbf{p50} & \textbf{p90} \\
   \midrule
   D. Neg. (Redshift) &  3.58 & 11.50 & 1.99 & 8.50 & 2.25 & 6.96 & \textbf{1.43} & \textbf{3.36} \\
   D. Neg. (DuckDB) &  3.02 & 1196 & 1.76 & 10.58 & 2.18 & 8.20 & \textbf{1.31} & \textbf{2.74} \\
   CDF & 11.05 & 131.4 & \textbf{1.31} & 2.81 & 1.36 & 3.45 & 1.34 & \textbf{2.15} \\
   Ordered Index & 3.29 & 16.34 & \textbf{1.39} & 2.67 & 1.46 & 2.57 & 1.48 & \textbf{2.53} \\
   Know Gr. By Agg. & 61.36 & 89.85 & \textbf{1.06} & 1.21 & 1.09 & 1.23 & 1.07 & \textbf{1.18} \\
  \bottomrule
\end{tabularx}
\endgroup
\end{table}

\begin{figure}[t]
  \centering
  \includegraphics[width=\columnwidth]{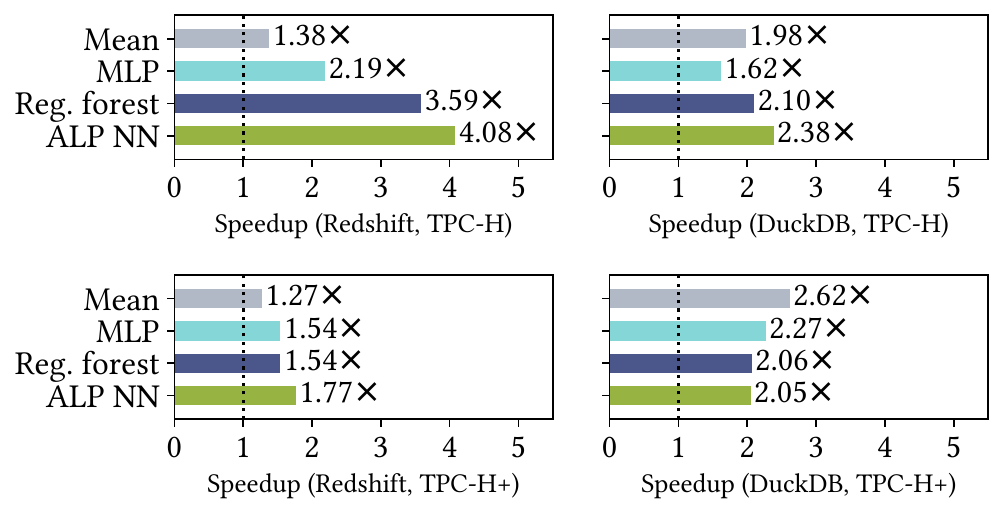}
  \caption{End-to-end effect of different accelerator modeling strategies on
    query performance (higher speedup is better).}
  \label{fig:model-e2e-effect}
\end{figure}

\subsection{Accelerator Performance Models}
\label{sec:eval-learning}

\subsubsection{Prediction Accuracy.}
We train an \alp{} neural network for the accelerators in our first case study
(Section~\ref{sec:eval-cs1-library}) using a dataset of $\sim$10k data points
per accelerator with an 80/10/10 train, validation, test split.
We compare our \alp{} neural network to three baselines:
\begin{enumerate*}[label=(\roman*)]
  \item always predicting the mean (geomean) run time,
  \item a regression random forest~\cite{randomforest-ho95} (3 trees), and
  \item a multi-layer perceptron (MLP)~\cite{dlbook-2016} (3 layers).
\end{enumerate*}
For the regression forest and MLP, we manually design a feature vector for each
accelerator.
We evaluate prediction accuracy using Q-error, defined as $\text{max}\left(p/a,
a/p\right)$, where $p$ and $a$ are the predicted and actual run times
respectively~\cite{qerror-moerkotte09}. Lower is better and 1.0 is the best
possible Q-error.
Table~\ref{tbl:alp-modeling} lists our results, from which we draw the
following conclusions.

\textbf{The \alp{} neural network achieves the lowest p90 Q-error on
our tested accelerators and a p50 Q-error comparable to the best baseline.}
On the domain negation accelerator, our \alp{} neural network achieves the
lowest Q-error compared to the other baselines. This is because (i) it is harder
to model compared to the other accelerators in our library, and (ii) our \alp{}
neural network provides a better inductive bias (encoding the accelerator's
logical structure) compared to a simple MLP and regression forest.

\textbf{Our \alp{} model featurization is general enough to work across diverse
accelerators.}
Unlike our baselines, which use hand-designed features, our \alp{} model uses
one featurization method for all the accelerators
(Section~\ref{sec:alp-model-details}) and achieves the best p90 Q-error.

\begin{figure}[t]
  \includegraphics[width=\columnwidth]{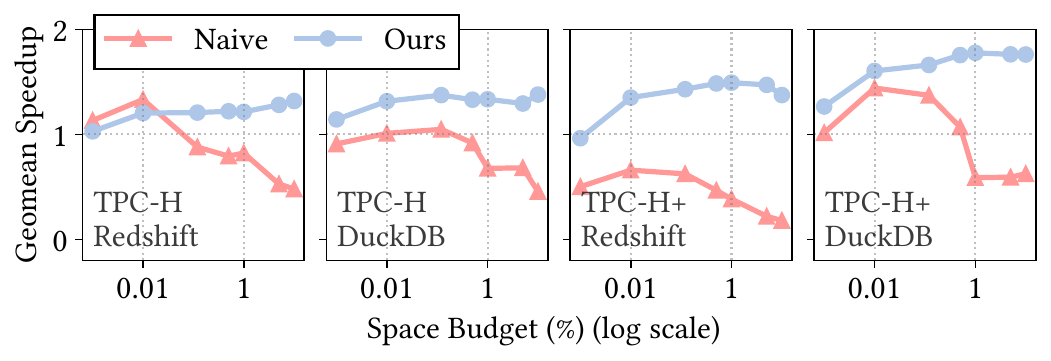}
  \caption{Workload speedup for varied space budgets.}
  \label{fig:e2e-space-sweep}
\end{figure}

\subsubsection{End-to-End Effect.}
Next, we study how different accelerator performance models affect
\thesystem{}'s query execution decisions. We run our workloads with each model
option and compare the geomean achieved speedup.
Figure~\ref{fig:model-e2e-effect} shows our results; higher is better.
Here we compute the geomean only among the queries where \thesystem{} uses an
accelerator to highlight the impact of different models. From these results, we
draw the following conclusions.

\textbf{Non-trivial learned models achieve the highest speedups and avoid
slowdowns.}
Except for \tpchp{} on DuckDB, predicting the mean value does not lead to the
highest speedup. On Redshift, using the mean value causes slowdowns as
\thesystem{} mistakenly chooses an accelerator that is slower than the base
system.
For \tpchp{} on DuckDB, all model options happen to result in the same
accelerator choices. The mean value and MLP baselines have higher speedups
because they require fewer features (meaning a faster optimization time)
compared to the other two models.

\textbf{The \alp{} NN provides the best speedup in three of the four
database/workload combinations.}
It makes a better run time prediction for a domain negation accelerator
instance, which lets \thesystem{} correctly choose that instance in the three
cases (whereas the other models do not). While the simple models do well, they
use hand-designed features for each accelerator. In contrast, the \alp{} neural
network model uses one common featurization method.

\subsection{Space Budget Sensitivity}\label{sec:eval-budget}
Next, we study the space budget's effect on \thesystem{}'s acceleration as it
varies from 0.001\% to 10\%. Figure~\ref{fig:e2e-space-sweep} shows our results.
The horizontal axis is in log scale and the vertical axis is the workload's
geomean speedup (higher is better). We draw two conclusions.

\textbf{\thesystem{}'s speedup increases with a larger space budget and it
accelerates the workload with budgets as small as 0.01\%.}
This is because we can use more accelerator instances given a larger budget. At
small space budgets (up to 0.01\%), the na\"ive strategy's speedup increases;
there are only a few usable candidates and they happen to lead to a speedup.
However, cost models are still important to use, as the na\"ive strategy causes
slowdowns at large space budgets and its speedup is always below 1.0$\times$ for
\tpchp{} on Redshift. The na\"ive strategy has a slightly higher speedup
compared to \thesystem{} on TPC-H Redshift up to a budget of 0.01\% due to the
computational overhead of using \thesystem{}'s models.

\textbf{With larger space budgets, the na\"ive strategy degrades
performance.}
A larger space budget allows for more accelerator candidates, but the na\"ive
strategy does not consider if these options are beneficial before using them
(hence the worsening slowdowns). On Redshift, the slowdowns are mostly from slow
intermediate result transfers. On DuckDB, the slowdowns beyond a 1\% budget are
because the na\"ive strategy selects slow domain negation instances.

\begin{figure}[t]
  \includegraphics[width=\columnwidth]{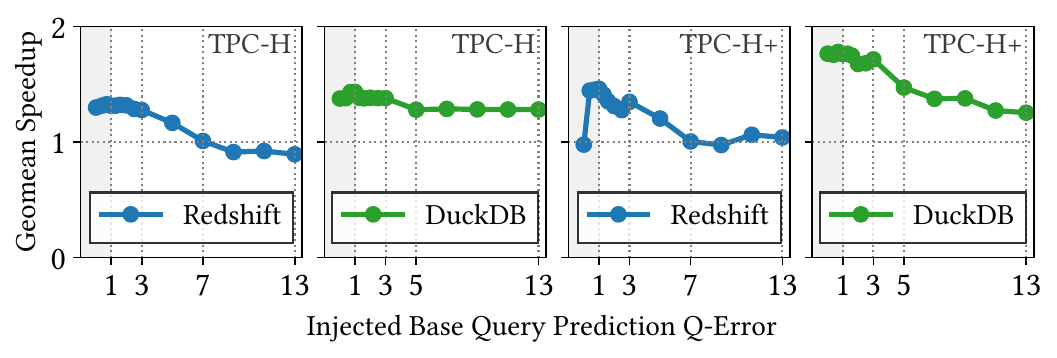}
  \caption{Robustness to query run time prediction error.}
  \label{fig:qpp-robustness}
\end{figure}

\begin{figure*}[t]
  \centering
  \begin{subfigure}{0.23\textwidth}
    \centering
    \includegraphics[width=\textwidth]{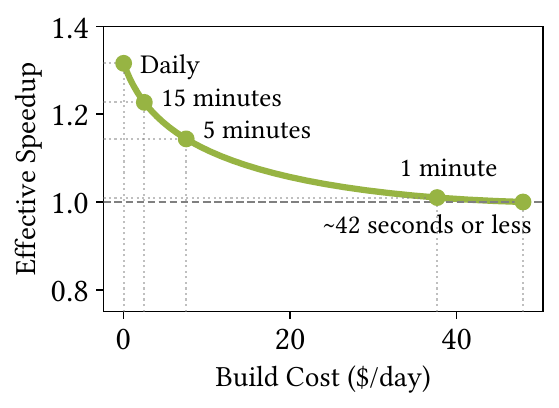}
    \caption{Redshift (TPC-H)}
  \end{subfigure}
  \hfill
  \begin{subfigure}{0.23\textwidth}
    \includegraphics[width=\textwidth]{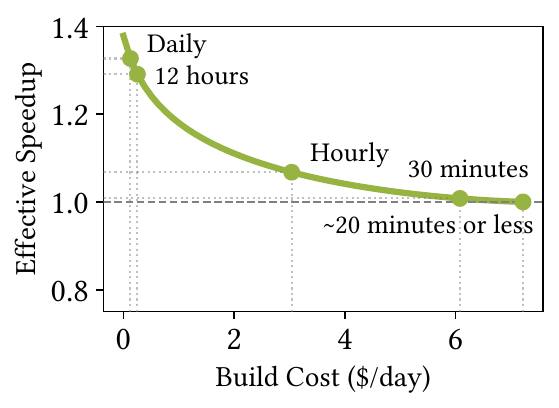}
    \caption{DuckDB (TPC-H)}
  \end{subfigure}
  \hfill
  \begin{subfigure}{0.23\textwidth}
    \centering
    \includegraphics[width=\textwidth]{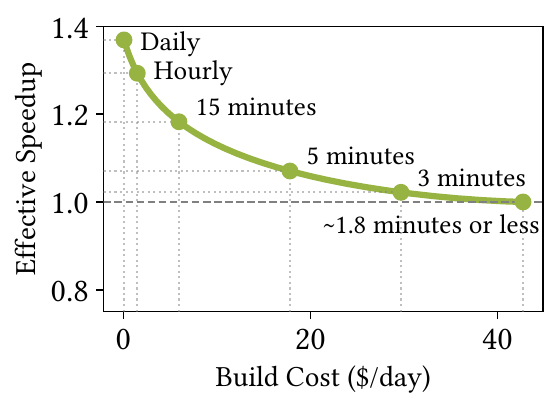}
    \caption{Redshift (\tpchp{})}
  \end{subfigure}
  \hfill
  \begin{subfigure}{0.23\textwidth}
    \includegraphics[width=\textwidth]{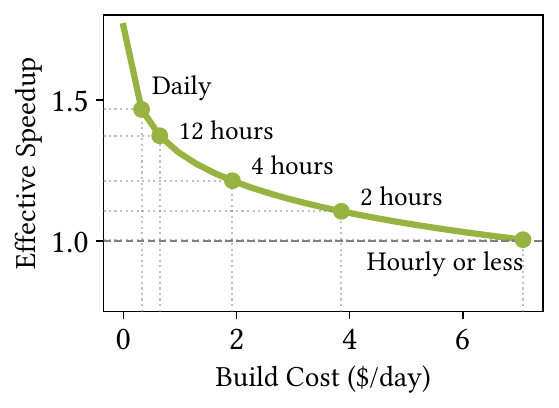}
    \caption{DuckDB (\tpchp{})}
  \end{subfigure}
  \caption{Impact of simulated bulk writes (e.g., ETLs) on \thesystem{}'s
    achieved speedup as we vary their frequency.}
  \label{fig:write-sweep}
\end{figure*}

\subsection{Query Run Time Prediction Robustness}
\label{sec:eval-robustness}
Now we study how the query run time predictor's accuracy affects \thesystem{}'s
decisions. We inject error into the ground truth query run times, and measure
the geomean speedup on our workloads. Figure~\ref{fig:qpp-robustness} shows our
results.
The horizontal axis is the injected Q-error; we test over-predictions up to a
Q-error of 13 and under-predictions (the shaded region) up to a Q-error of 1.99.

\textbf{\thesystem{} is robust to query run time prediction error on Redshift
and DuckDB up to a Q-error of 3.0.}
On Redshift, large query run time over-predictions cause \thesystem{} to use
accelerators whose data transfer time would exceed the the accelerator's benefit
because it thinks the original query is even slower. DuckDB is less sensitive to
query prediction errors because the data transfer time is negligible. However,
large over-predictions lead \thesystem{} to select slow accelerators that appear
faster than the inflated base query estimate, but are slower than the true time.

\textbf{\thesystem{} can achieve its results with existing query run time
predictors.}
On both Redshift and DuckDB, \thesystem{} maintains most of its performance up
to a prediction Q-error of 3.0 (1.6~on TPC-H+ for Redshift). Prior work on
query run time predictors shows that this accuracy is
achievable~\cite{stage-wu24, t3-rieger25, lce-sun19, qppnet-marcus19,
zeroshot-hilprecht22, unify-wu22, neo-marcus19, bao-marcus22, brad-yu24}.

\begin{table}[t]
  \footnotesize
  \caption{\thesystem{}'s query planning overhead.}
  \label{tbl:eval-overhead}
  \begin{tabularx}{\columnwidth}{@{}ll XXXX@{}}
    \toprule
    & & \multicolumn{2}{c}{\textbf{TPC-H}} &
      \multicolumn{2}{c}{\textbf{\tpchp{}}} \\
    \cmidrule(lr){3-4} \cmidrule(l){5-6}
    \textbf{RDBMS} & \textbf{Percentile} & \textbf{Time} & \textbf{Percent} &
      \textbf{Time} & \textbf{Percent} \\
    \midrule
    Redshift & p50 & 65.1~ms & 1.88\%   & 75.5~ms & 1.09\% \\
             & p90 & 109~ms & 43.7\%    & 202~ms & 3.14\% \\
    \midrule
    DuckDB   & p50 & 4.11~ms & 0.0900\% & 5.15~ms & 0.0804\% \\
             & p90 & 14.1~ms & 1.69\%   & 528~ms & 2.85\% \\
    \bottomrule
  \end{tabularx}
\end{table}

\subsection{Runtime Overhead}\label{sec:eval-overhead}
Table~\ref{tbl:eval-overhead} shows that \thesystem{}'s overhead, comprising
plan enumeration and performance prediction, is below 5\% in all but one case.
Relative overhead is lower on \tpchp{} than TPC-H because its queries run
longer. End-to-end latency still improves despite this overhead
(Figure~\ref{fig:tpch-e2e}).
For Redshift, most overhead comes from remote calls to the cardinality
estimator. Since our Redshift experiments ran outside AWS, they have higher
overhead than DuckDB. Replacing the remote call with an embedded estimator
would reduce latency.
For workloads comprising short-running queries, users can have \thesystem{} run
optimization in the background when it first encounters a query and use only
\emph{cached} accelerator execution plans.
Since industrial workloads have repetition~\cite{redset, redshift_pred_cache},
this approach would still allow speedups without slowing first-run latency.

\subsection{Impact of Writes}\label{sec:eval-writes}
\thesystem{}'s accelerators may use pre-computed state, which writes can make
stale. \thesystem{} targets engines running with bulk writes (common in OLAP
settings~\cite{olap_survey}).
After a bulk write, \thesystem{} routes queries to the underlying RDBMS instead
of its stale accelerators. In the background, it rebuilds the accelerators and
uses them until the next write. We leave incremental accelerator state
maintenance~\cite{ivm-zhou07, dbsp-budiu23} to future work.
Here, we study how write frequency affects \thesystem{}'s speedups. Under
this write model, \thesystem{} is beneficial when the time between writes
exceeds the rebuild time.
Using the 10\% budget results from \Cref{sec:eval-cs1,sec:eval-cs2}, we measure
speedups over a one day period by simulating write frequencies from once a day
to once a minute. Queries arriving during a rebuild run on the RDBMS; those
arriving after a rebuild but before a write use \thesystem{}.
We assume a cloud setting where rebuilds can be offloaded to additional compute
at additional cost.

Figure~\ref{fig:write-sweep} shows our results. The vertical axis is the
workload's geomean speedup over all queries. The horizontal axis is the build
cost, which is the time spent rebuilding the accelerators multiplied by the
on-demand EC2~\cite{ec2} price of an equivalent VM performing the rebuilding.
The labeled points represent different write frequencies (e.g., once every 5
minutes, etc.). When the write frequency equals or exceeds the build time, the
effective workload speedup is 1.0$\times$ because the rebuild will not keep up
with the new writes.

\textbf{\thesystem{} supports daily bulk writes with a negligible performance
impact.}
Our rebuild times are 47 seconds (Redshift, TPC-H), 25 minutes (DuckDB, TPC-H),
2 minutes (Redshift, \tpchp{}), and around 1 hour (DuckDB, \tpchp{}). Overall,
the build time is much less than 24 hours, meaning \thesystem{} can provide its
speedups with daily ETLs. Our Redshift setup has more resources available for
accelerator building, so it supports more frequent writes (e.g., every 15
minutes) with only a small speedup trade-off.

%% file: sections/06-related-work.tex
\section{Related Work}\label{sec:related-work}

\sparagraph{Database extensions, UDFs, UDOs.}
User-defined functions and operators (UDFs/UDOs)~\cite{tuplex, yesql,
udf_outline, udo-sichert}, extensions~\cite{pgextensions, duckdb-extensions},
and declarative sub-operators~\cite{dsos-jungmair23} let users add functionality
to an RDBMS.
However, these mechanisms are constrained by the host RDBMS's extension
framework, which (i) limits which accelerators it can support (e.g.,~Redshift
only supports scalar UDFs~\cite{redshift-udf-support}), (ii) may provide limited
optimizer support~\cite{graceful-udf-wehrstein25, qoff-chaudhuri93}, and (iii)
couples implementations to a particular RDBMS.
Declarative sub-operators~\cite{dsos-jungmair23}, in particular, require
modifying the underlying RDBMS to support their operator abstraction, whereas
\thesystem{} is non-invasive.

\sparagraph{Query plan patterns.}
Abstract operator trees~\cite{super-operators} and Calcite~\cite{calcite} also
provide methods to specify logical plan patterns, but both are less expressive
than \alps{}.
Abstract operator trees use abstract edges to capture a varying number of unary
operators~\cite{super-operators}, meaning they cannot match arbitrarily nested
binary operators (e.g., left-deep joins; see \Cref{apdx:aot} for details).
Calcite's rules define fixed-depth patterns, meaning they cannot match variable
length operator chains (e.g., arbitrarily nested filters) in a single rule
application.
\alps{} can express both patterns using repetitions.

\sparagraph{Auto materialized views.}
Choosing accelerator(s) to instantiate and use is similar to automated
materialized view (MV) selection~\cite{automv, mvsel-mqo, mvsel-mdim} and
exploitation~\cite{mvopt, mvopt-survey}.
Our problem setting differs in two key ways.
First, MV selection algorithms can materialize any sub-query.
\thesystem{} uses a fixed set of accelerator \alps{}, which constrains the
decision space.
Second, \thesystem{} makes query planning decisions over \alps{}, which each
represent a set of query fragments.
An MV is only a single concrete query sub-expression.

\sparagraph{Auto index selection.}
\thesystem{}'s accelerator instance selection problem is similar to automatic
index selection~\cite{automv, autoindex-survey, autoadmin-chaudhuri98,
msr_paper}.
The key difference is that accelerators are more general than indexes as they
can represent any query fragment, which requires a new approach.
Indeed, indexes are one specific kind of accelerator.

\sparagraph{Micro-adaptivity.}
Micro-adaptive techniques accelerate queries by finding low-level
optimizations that exploit the hardware and data properties during query
execution~\cite{microadaptivity-vectorwise, permutable-compiled-queries,
excalibur}.
Compared to \thesystem{}, micro-adaptive techniques occupy a different spot on
the speedup versus integration effort trade-off spectrum.
To get speedups using low-level optimizations and to rapidly switch between
optimizations at runtime, micro-adaptive techniques must be deeply integrated
into the query execution engine.
In contrast, \thesystem{} does not require code changes to the underlying RDBMS,
allowing it to even work with proprietary systems (e.g.,~Redshift).

\sparagraph{Federated databases.}
Federated databases~\cite{breitbart1992overview, breitbart1988update,
hwang1994myriad, pu1988superdatabases, sheth1990federated,
georgakopoulos1991multidatabase, josifovski2002garlic, bent2008dynamic,
zhang2022skeena, microsoft-fabric, insitu-crossdb-gavriilidis23} optimize
queries across multiple engines, focusing on splitting queries across
full-fledged databases unify access to multiple data sources.
In contrast, \thesystem{} focuses on intelligently offloading parts of a query onto
specialized query processors that cannot necessarily run any query fragment.

\sparagraph{Synthesizing query engines.}
Orthogonal work generates specialized query engines~\cite{castor-feser,
bespoke-olap, gendb, gpt-db}.
However, users must declare all query templates upfront and only queries
matching the template can then leverage the custom executor.
\thesystem{}, in contrast, replaces subqueries with accelerators. One
instantiated accelerator can thus support many queries (e.g., whenever appearing
as a subquery).
LLM-based query engine generators~\cite{bespoke-olap, gendb} could also help
generate \thesystem{} accelerators, which we leave to future work.

%% file: sections/07-conclusion.tex
\section{Conclusion}
We presented abstract logical plans (\alps{}) and \thesystem{}: a new
non-invasive query planning and execution framework that provides a practical
way to integrate query accelerators with any RDBMS that supports data
import/export.
\alps{} and a layer of indirection let \thesystem{} strike the sweet spot
between specializing the RBDMS for performance and maintaining its generality,
all without having to modify its source code.
Our three case studies show that users can integrate workload-specific
accelerators that \thesystem{} automatically applies when beneficial, achieving
geomean speedups of \BestGeomeanTpch{}, \BestGeomeanTpchp, and
\BestGeomeanSQLStorm{}.

%% file: sections/08-appendix.tex
\section*{Summary of Appendices}
Our appendices provide the following additional material for interested readers:
\begin{itemize}[leftmargin=*]
  \item Additional end-to-end experiments on TPC-H and \tpchp{} with a 1\% space
    budget (\Cref{apdx:tpch1,apdx:tpchp1}).
  \item An experiment studying the \alp{} neural network's ability to generalize
    to larger unseen dataset sizes (\Cref{apdx:model-gen}).
  \item Additional background and details on the NFTA e-graph matching algorithm
    that \thesystem{} uses (\Cref{apdx:matching}).
  \item A discussion on and a proof of \alp{}'s expressivity over abstract
    operator trees~\cite{super-operators} (\Cref{apdx:aot}).
\end{itemize}

\section{Additional Experiments}
\subsection{Case Study 1: TPC-H with 1\% Budget}\label{apdx:tpch1}
Figure~\ref{fig:tpch-e2e-apdx} shows our results from case study 1 (TPC-H) with
a 10\% and 1\% space budget. We draw the following conclusion.

\textbf{\thesystem{}'s selects good accelerator plans even with a constrained
space budget.}
On TPC-H, as we reduce the space budget from 10\% to 1\%, \thesystem{}
intelligently prunes its index accelerator instances for Q18 (Redshift and
DuckDB) and Q19 (DuckDB) \captionlabel{M} while preserving the CDF and domain
negation accelerators for Q1, Q4, and Q13 \captionlabel{N}. This result shows
how \thesystem{} prioritizes accelerators providing the most speedup within the
space budget.

\subsection{Case Study 2: \tpchp{} with 1\% Budget}\label{apdx:tpchp1}
Figure~\ref{fig:tpchplus-e2e-apdx} shows our results from case study 2
(\tpchp{}) with 10\% and 1\% space budgets. We draw the following conclusion.

\textbf{Like in our first case study, \thesystem{} selects good accelerator
plans with a constrained space budget.}
For Q9 on Redshift, both strategies perform the same with a 10\%
budget~\captionlabel{P}. However, they diverge at 1\%. The na\"ive strategy
chooses two domain negation accelerators that use less space compared to
\thesystem{}'s choice (to save space for other accelerators it greedily
selected), but this choice causes a slowdown. In contrast, \thesystem{}
correctly keeps its original accelerator choices to maintain the
speedup~\captionlabel{Q}.

\begin{figure*}[t]
  \centering
  \begin{subfigure}[t]{0.47\textwidth}
    \begin{overpic}[width=\textwidth]{figures/05a-redshift-e2e-10pct}
      \put(24,25){\captionlabelsmall{M}}
    \end{overpic}
    \caption{Redshift (10\%) (X indicates timeout)}
    \label{fig:tpch-redshift10-apdx}
  \end{subfigure}
  \qquad
  \begin{subfigure}[t]{0.47\textwidth}
    \begin{overpic}[width=\textwidth]{figures/05b-duckdb-e2e-10pct}
      \put(25,21.5){\captionlabelsmall{M}}
    \end{overpic}
    \caption{DuckDB (10\%)}
    \label{fig:tpch-duckdb10-apdx}
  \end{subfigure}

  \begin{subfigure}[t]{0.47\textwidth}
    \begin{overpic}[width=\textwidth]{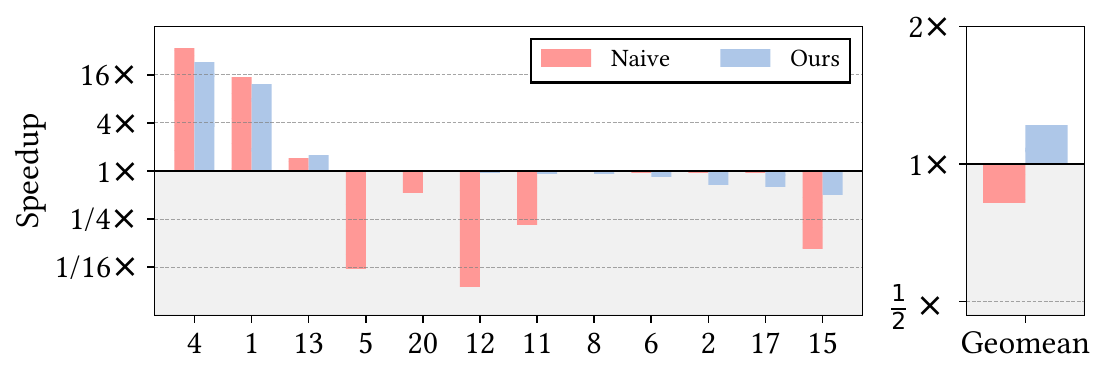}
      \put(21,29){\captionlabelsmall{N}}
    \end{overpic}
    \caption{Redshift (1\%)}
    \label{fig:tpch-redshift1-apdx}
  \end{subfigure}
  \qquad
  \begin{subfigure}[t]{0.47\textwidth}
    \begin{overpic}[width=\textwidth]{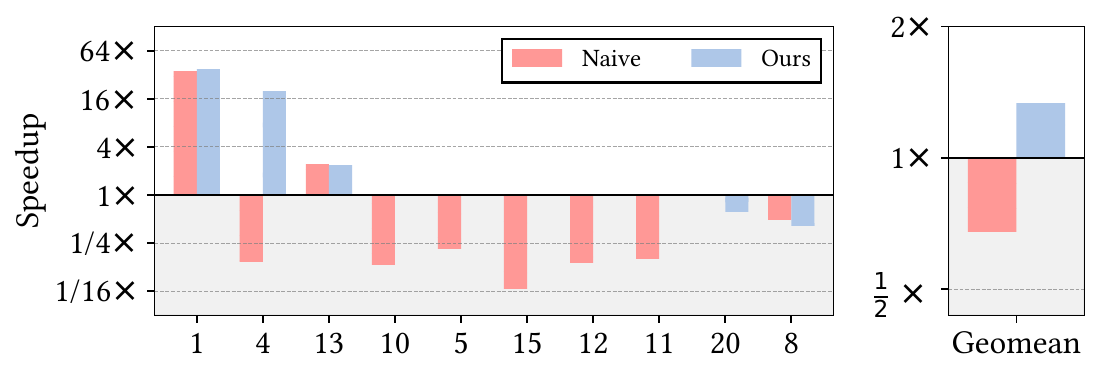}
      \put(23,28){\captionlabelsmall{N}}
    \end{overpic}
    \caption{DuckDB (1\%)}
    \label{fig:tpch-duckdb1-apdx}
  \end{subfigure}
  \caption{\thesystem{} accelerates TPC-H (SF = 100) on both Redshift and DuckDB
  by \OverallTpchRedshiftGeomeanSpeedup{} and \OverallTpchDuckDBGeomeanSpeedup{}
  respectively (geomean).}
  \label{fig:tpch-e2e-apdx}
\end{figure*}

\begin{figure*}
  \centering
  \begin{subfigure}[t]{0.47\textwidth}
    \begin{overpic}[width=\textwidth]{figures/06a-redshift-tpchp-10pct}
      \put(28,23){\captionlabelsmall{P}}
    \end{overpic}
    \caption{Redshift (10\%) (X indicates timeout)}
    \label{fig:tpchp-redshift10-apdx}
  \end{subfigure}
  \qquad
  \begin{subfigure}[t]{0.47\textwidth}
    \begin{overpic}[width=\textwidth]{figures/06b-duckdb-tpchp-10pct}
      \put(24.5,22){\captionlabelsmall{P}}
    \end{overpic}
    \caption{DuckDB (10\%)}
    \label{fig:tpchp-duckdb10-apdx}
  \end{subfigure}

  \begin{subfigure}[t]{0.47\textwidth}
    \begin{overpic}[width=\textwidth]{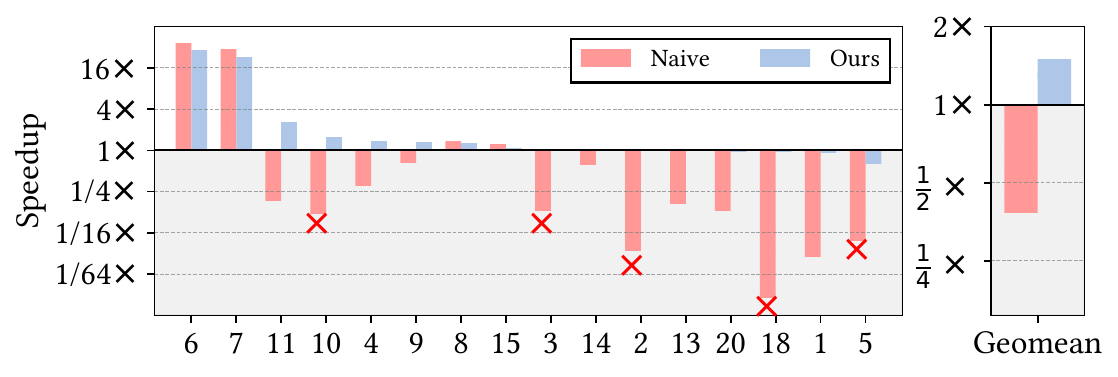}
      \put(35.5,23){\captionlabelsmall{Q}}
    \end{overpic}
    \caption{Redshift (1\%) (X indicates timeout)}
    \label{fig:tpchp-redshift1-apdx}
  \end{subfigure}
  \qquad
  \begin{subfigure}[t]{0.47\textwidth}
    \begin{overpic}[width=\textwidth]{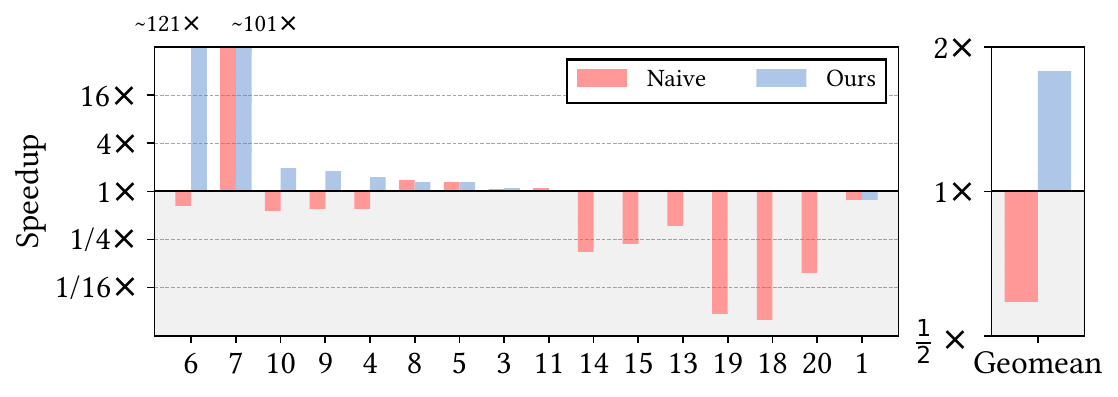}
      \put(27,23){\captionlabelsmall{Q}}
    \end{overpic}
    \caption{DuckDB (1\%)}
    \label{fig:tpchp-duckdb1-apdx}
  \end{subfigure}
  \caption{\thesystem{} accelerates \tpchp{} on Redshift and DuckDB
  by \OverallTpchpRedshiftGeomeanSpeedup{} and \OverallTpchpDuckDBGeomeanSpeedup{}
  respectively (geomean).}
  \label{fig:tpchplus-e2e-apdx}
\end{figure*}

\subsection{Model Generalization to Larger Datasets}\label{apdx:model-gen}
We evaluate the out-of-distribution generalization of the \alp{} neural
network as the dataset size increases. Specifically, we train the network for
domain negation on Redshift using TPC-H scale factors (SF) 1 and 10, and test on
SF = 100. We compare this generalizing configuration (denoted by the ``gen.''
suffix) against the same model trained on all three scale factors. We repeat
this setup for a standard MLP too. Figure~\ref{fig:model-gen} shows the p50 test
Q-error, where lower is better. \textbf{From these results, we can conclude that
that the ALP network generalizes better than the MLP to unseen dataset sizes.}
The \alp{} network's p50 Q-error increases by only 0.56 when generalizing,
compared to a larger increase of 1.00 for the MLP.

\begin{figure}
  \centering
  \includegraphics[width=\columnwidth]{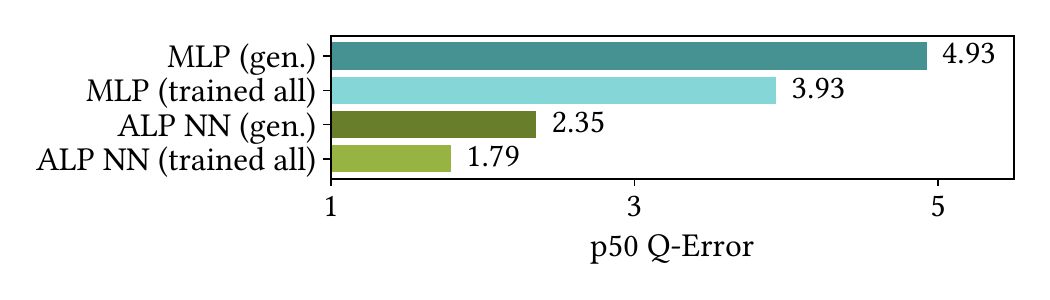}
  \caption{Model generalization across dataset sizes.}
  \label{fig:model-gen}
\end{figure}

\section{NFTA Matching on E-Graphs}\label{apdx:matching}

\sparagraph{Background on NFTAs.}
To help with understanding our matching algorithm, we first give some background
on NFTAs. An NFTA is a state machine for trees. Formally, it is a 4-tuple $(Q,
F, Q_f, \Delta)$ comprising a set of states $Q$, accepting states $Q_f$ where
$Q_f \subseteq Q$, an alphabet $F$, and a set of transitions
$\Delta$~\cite{tree-acceptors-doner70, tree-automata-thatcher68, tata}.
Informally, $F$ is the set of nodes that can appear in an expression tree. We
use the notation $a(c_1, c_2, \dots, c_k)$ to refer to a node $a$ that has $k$
child nodes $c_1$ to $c_k$. For example, a filter in a logical plan could be
represented as $\text{Filter}(x, p)$ where $x$ is its child plan operator and
$p$ represents its predicate expression. Transitions are given as $(l_1, \dots
l_k) \rightarrow_{t} r$ where $l_1, \dots, l_k, r \in Q$. If we encounter a node
$t$ whose $k$ children are in states $l_1$ to $l_k$, we transition to state $r$.
A transition for a leaf node $t$ is expressed as $() \rightarrow_{t} r$.
A bottom-up NFTA matches an expression tree if there exists a sequence of states
from its leaf nodes to its root and the root state is in $Q_f$~\cite{tata}.

\sparagraph{Background on E-Graphs.}
E-graphs are data structures that represent multiple equivalent tree expressions
in a single data structure~\cite{egraphs-thesis-nelson, egraphs-book}, much like
logical groups in a Cascades optimizer~\cite{cascades-graefe95}.
Concretely, they are directed graphs of e-classes, which contain e-nodes. An
e-node is any node that would appear in a concrete tree with the twist that its
children are e-classes. The e-nodes in the same e-class describe logically
equivalent expressions. For example, $\text{Filter}(\text{Filter}(x, p_1), p_2)$
is equivalent to $\text{Filter}(x, p_1 \wedge p_2)$, so their top-level nodes
would be in the same e-class. Critically, e-graphs are unlike expression trees
because they can contain cycles.

\begin{algorithm}[t]
  \caption{Top-down NFTA matching on an e-graph.\\Our two extensions are
  \textcolor{ForestGreen}{highlighted in green.}}
  \small
  \label{alg:nfta-match}
  \begin{algorithmic}
    \Require $N$: NFTA, $E$: Query e-graph, $R$: Root e-class
    \Ensure True if $N$ matches $E$ at $R$, False otherwise
    \LComment{$S$: Curr. states, $\Delta$: Transitions, $C$: Current e-class, $P$: Match path}
    \Function{TopDownMatch}{$S, \Delta, C, P$}
      \LComment{Traverse the transitions and find child states.}
      \State $M \gets $ [ ]
      \ForAll{transition $((l_1, l_2, \dots, l_k) \rightarrow_{t} r) \in \Delta$ and $r \in S$}
        \If{{\color{ForestGreen} $t$ in e-class $C$ and it has $k$ children}}
          \If{{\color{ForestGreen}$(r, C)$ does not create a cycle in $P$}}
            \State Append $((l_1, l_2, \dots, l_k) \rightarrow_{t} r)$ to $M$
          \EndIf{}
        \EndIf{}
      \EndFor{}
      \If{any transition in $M$ is to a leaf (e.g., $() \rightarrow_{t} r$)}
        \State \Output True
      \EndIf{}
      \LComment{Recurse to children.}
      \State $M' \gets $ {\color{ForestGreen}Group transitions in $M$ by
        terminal $t$. Collect child states by position into sets.}
      \ForAll{{\color{ForestGreen} $((L_1, L_2, \dots L_k) \rightarrow_{t} R )$ in $M'$}}
        \State $Out \gets $ [ ]
        \ForAll{$i$ from $1$ to $k$}
          \State $C_i \gets $ terminal $t$'s child $i$
          \State $P' \gets $ {\color{ForestGreen} $P \cup \{(r, C) \mid r \in R\}$}
          \State Append \Call{TopDownMatch}{$L_i, \Delta, C_i, P'$} to $Out$
        \EndFor{}
        \If{$Out$ is all True}
          \State \Output True
        \EndIf{}
      \EndFor{}
      \State \Output False
    \EndFunction{}
    \State
    \State $(Q, Q_F, \Delta) = N$ \Comment{$Q$: States, $Q_F$: Final states, $\Delta$: Transitions}
    \State \Output \Call{TopDownMatch}{$Q_F, \Delta, R, \{\}$}
  \end{algorithmic}
\end{algorithm}

\sparagraph{NFTA Matching Algorithm on E-Graphs.}
Algorithm~\ref{alg:nfta-match} shows our matching algorithm, which makes two
extensions ({\color{ForestGreen} highlighted in green}) to NFTA top-down
matching~\cite{tata}. \thesystem{} runs this matching algorithm on each e-class
in the e-graph.

Our first extension is to search over all e-nodes in the e-class. We accept
(match) the root e-class if there is at least one e-node in the root class where
all of its children reach a leaf transition. We examine all e-nodes in the
e-class because the NFTA might only match a tree rooted at one of the nodes in
the class (i.e., corresponding to a specific way of expressing the accelerator's
plan fragment).

Our second extension is in rejecting match cycles. NFTA transitions can contain
state cycles and e-graphs can contain cycles. Existing match algorithms do not
have this problem because tree expressions do not have cycles, so it is
impossible to get stuck in a match cycle. Our key observation is to reject
matching paths that contain \emph{both} a transition cycle and e-class cycle.
Rejecting only transition cycles is incorrect because transition cycles describe
repetitions in the tree expression. Rejecting only e-class cycles is incorrect
because the NFTA could be matching a tree expression that has a finite number of
recursive repetitions.
We track the \emph{path} of NFTA states and e-classes during a match and stop
following any transitions that would create both a state and e-class cycle.

\section{\alps{} Compared to Abstract Operator Trees}\label{apdx:aot}

As discussed in Section~\ref{sec:related-work}, abstract operator
trees~\cite{super-operators}, which describe sets of super operators, are
similar in spirit to \alps{} as they both ``describe query plan patterns.''
However, abstract operator trees are different from \alps{}, in part, because
they are strictly less expressive than \alps{}. We prove this statement below.

\begin{theorem}
Abstract operator trees~\cite{super-operators} are strictly less expressive than
\alps{}.
\end{theorem}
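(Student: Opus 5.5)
The proof has two parts: an embedding showing that every abstract operator tree (AOT) can be expressed as an \alp{}, and a separating example showing that some \alp{} cannot be expressed as an AOT. Following~\cite{super-operators}, I take an AOT to be a tree of concrete operator nodes and abstract edges. An abstract edge between a parent $p$ and a child $c$ stands for any (possibly empty) chain of unary operators inserted between $p$ and $c$. An AOT matches a plan if the plan arises from the AOT by replacing each abstract edge with such a unary chain.

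\textbf{Inclusion.} I translate an AOT into an \alp{} template by structural induction. A concrete operator node $o$ with children $T_1,\dots,T_n$ becomes the tree token $o(T'_1,\dots,T'_n)$, where $T'_i$ is the translation of $T_i$. Non-operator arguments of $o$ (predicates, column lists) become typed variables of the appropriate type from Table~\ref{tab:alps-var-types}; these are resolved online, so they are unconstrained. An abstract edge to a subtree $T_c$ becomes the repetition $\mathbf{Rep}(x; T'_c, T_U, c)$. Here $T_U = \mathbf{Alt}(y; u_1(c,\dots), \dots, u_m(c,\dots))$ ranges over the finitely many unary operator tokens $u_j$ of the logical algebra, each with its non-plan arguments given as variables. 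Since repetitions allow zero occurrences and each repeating instance gets a fresh variable scope, the repetition matches exactly the unary chains of arbitrary length that end in $T'_c$. A leaf of the AOT that stands for an arbitrary input becomes a table-expression variable. For the validator I use the constant \texttt{true}. A routine induction on the AOT then shows that a plan matches the AOT iff it matches the constructed template.

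\textbf{Strictness.} I use the left-deep join example from \Cref{sec:related-work}. The \alp{} template is $\mathbf{Rep}(r; \mathbf{Var}(b;\text{table ref}), \mathrm{Join}(c, \mathbf{Var}(t;\text{table ref}), \mathbf{Var}(p;\text{bool})), c)$ with validator \texttt{true}. It matches left-deep join trees of every depth $k\ge 0$ over base-table scans. Suppose some AOT $A$ matched exactly this set of plans. Because abstract edges insert only unary operators, every plan matched by $A$ contains exactly as many binary (join) nodes as $A$ has concrete binary nodes. Even if $A$ had several alternatives, there would be finitely many of them, so the join count of matched plans would be bounded. The \alp{} language contains plans with arbitrarily many joins, so no AOT defines it.

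\textbf{Main obstacle.} The main difficulty is fixing the AOT semantics faithfully enough for the counting invariant to be rigorous. In particular, I must confirm that abstract edges may only absorb unary operators and never binary ones, and I must handle any ``wildcard'' leaves that AOTs permit. If an AOT wildcard leaf can absorb an arbitrary subtree, then it would also match non-left-deep join trees, such as bushy joins or a join placed under a wildcard. In that case I would strengthen the argument. I would show that any AOT accepting all left-deep chains must contain a wildcard in a position that also admits a bushy plan, for example a plan whose right join input is itself a join, which the \alp{} rejects. This gives a contradiction either way.
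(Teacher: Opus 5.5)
Your proposal is correct and follows the paper's proof. It uses the same embedding, with core operators as tree tokens and each abstract edge as a $\mathbf{Rep}$ over an $\mathbf{Alt}$ of unary operators, and the same left-deep-join separating example; your binary-node counting invariant, with the wildcard-leaf case, makes rigorous what the paper states only as ``abstract edges support only unary operators, and joins are binary.'' One small adjustment: the paper's translation alternates over each abstract edge's own operator options $e_1,\dots,e_n$, not over all unary operators of the algebra, and you should do the same if edges restrict their permitted operators, so that your template does not accept plans the abstract operator tree rejects.
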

\begin{proof}
We prove the theorem by (i) showing that there exists a set of query plans that
\alp{} templates can recognize (i.e., match) but abstract operator trees cannot,
and (ii) that every abstract operator tree can be described by an \alp{}
template.

\sparagraph{Part (i).}
Suppose we want to match a left-deep inner join over base tables of arbitrary
depth. Let $\Bowtie(l, r)$ represent an inner join over $l$ and $r$. We omit the
join conditions for clarity. We can express this pattern as an \alp{} template
$T$ using a repetition:
\begin{align*}
    T &= \mathbf{Rep}(\mathrm{repeated\text{-}join}; T_B, T_R, c) \\
    T_B &= \mathbf{Var}(\mathrm{left}; \mathrm{TableRef}) \\
    T_R &= \Bowtie(c, \mathbf{Var}(\mathrm{right}; \mathrm{TableRef})
\end{align*}
This template captures arbitrarily left-deep nested $\Bowtie$ operations as the
repetition is rooted at the $c$ token in $T_R$.

Abstract operator trees use abstract edges to represent operators that can
repeat zero or more times. However, abstract edges only support \emph{unary
operators} (e.g., selections). Inner joins are binary operators. Thus this
pattern cannot be expressed using an abstract operator tree.

\sparagraph{Part (ii).}
An abstract operator tree comprises core operators connected with abstract
edges~\cite{super-operators}. We can construct an \alp{} template from an
abstract operator tree by converting each core operator into a token and each
abstract edge into a repetition over an alternation of the abstract edge's
operator options. Stated precisely, let $C$ be a core operator. Then a template
$T_C$ for the core operator is
\begin{align*}
  T_C &= C(T_I)
\end{align*}
where $T_I$ is a template for the incoming abstract edge or other core operator.
The spool operator (base of the abstract operator tree) has no incoming edge, so
it is represented using just a leaf token $T_{\mathrm{Spool}} = \mathrm{Spool}$.
Let $e$ be an abstract edge. A template $T_e$ for the abstract
edge is
\begin{align*}
  T_e &= \mathbf{Rep}(\mathrm{abs\text{-}edge}; T_I, T_R, c) \\
  T_R &= \mathbf{Alt}(\mathrm{abs\text{-}edge\text{-}options}; e_1(c), e_2(c), \dots, e_n(c))
\end{align*}
where $T_I$ represents a template for the abstract edge's incoming operator and
$e_1, \dots, e_n$ represent the unary operator options for the abstract edge
(e.g., selection $\sigma$, projection $\pi$, etc.).

\sparagraph{Conclusion.}
By the arguments in parts (i) and (ii), any abstract operator tree can be
expressed as an \alp{} and there exists a pattern that is expressible by \alps{}
but not abstract operator trees. Thus abstract operator trees are strictly less
expressive than \alps{}.
\end{proof}